\def\defas{\ensuremath{\mathrel{:=}}}
\DeclareMathOperator{\id}{id}
\def\Set#1#2{\ensuremath{
\left\{#1\,\middle|\,#2\right\}
}}
\def\norm#1{\|  #1 \| }
\def\abs#1{|  #1 | }
\DeclareMathDelimiter{\orbrack}{\mathopen}{operators}{"5D}{largesymbols}{"03}
\DeclareMathDelimiter{\clbrack}{\mathclose}{operators}{"5B}{largesymbols}{"02}
\def\intcc#1{\ensuremath{[#1]}}
\def\intoo#1{\ensuremath{\orbrack#1\clbrack}}
\def\intoc#1{\ensuremath{\orbrack#1]}}
\def\intco#1{\ensuremath{[#1\clbrack}}
\def\Hintcc#1{\ensuremath{\llbracket#1\rrbracket}}
\newtheorem{theorem}{Theorem}
\newtheorem{lemma}[theorem]{Lemma}
\newtheorem{prop}[theorem]{Proposition}
\newtheorem{remark}{Remark}
\newtheorem{assumption}{Assumption}
\newtheorem{corollary}{Corollary}
\title{ \LARGE \bf Safe Domains of Attraction for Discrete-Time  Nonlinear Systems: Characterization and Verifiable Neural Network Estimation }
\author{Mohamed Serry*, Haoyu Li*, Ruikun Zhou*, Huan Zhang, and Jun Liu 
\thanks {*Equal contribution}
\thanks{Mohamed Serry, Ruikun Zhou, and  Jun Liu are with the Department of Applied Mathematics, University of Waterloo, Waterloo, Ontario, Canada.  Email: \texttt{\{mserry, ruikun.zhou, j.liu\}@uwaterloo.ca}} 
\thanks{Haoyu Li is with the Department of Computer Science and Huan Zhang is with the Department of Electrical and Computer Engineering, University of Illinois Urbana-Champaign,
Urbana, IL 61801, USA. Email: \texttt{haoyuli5@illinois.edu, huan@huan-zhang.com}}
\thanks{This work was funded in part by the NSERC Discovery Grant, the Canada Research Chairs program, the U.S. National Science Foundation (NSF) IIS-2331967, and the Schmidt Science AI2050 program.}  }
\begin{document}
\maketitle
\begin{abstract}
Analysis of nonlinear autonomous systems typically involves estimating domains of attraction, which have been a topic of extensive research interest for decades.   Despite that, accurately estimating domains of attraction for nonlinear systems remains a challenging task, where existing methods are conservative or limited to low-dimensional systems. The estimation becomes even more challenging when accounting for state constraints. In this work, we propose a framework to accurately estimate safe (state-constrained) domains of attraction for discrete-time autonomous nonlinear systems. In establishing this framework, we first derive a new  Zubov equation, whose solution corresponds to the exact safe domain of attraction. The solution to the aforementioned Zubov equation is shown to be unique and continuous over the whole state space. We then present a physics-informed approach to approximating the solution of the Zubov equation using neural networks. To obtain certifiable estimates of the domain of attraction from the neural network approximate solutions,  we propose a verification framework that can be implemented using standard verification tools (e.g., $\alpha,\!\beta$-CROWN and dReal). To illustrate its effectiveness, we demonstrate our approach through numerical examples concerning nonlinear systems with state constraints.
\end{abstract}

\begin{keywords}
Safety, formal verification, neural networks, nonlinear systems, stability analysis, Zubov's theorem
\end{keywords}

\section{Introduction}
 The safe (i.e., state-constrained) domain of attraction (DOA) of a given dynamical system is the set of state values from which trajectories are guaranteed to converge to an equilibrium point of interest under the system's dynamics, while satisfying specified state constraints. Such a set provides a safe operation region, while  ensuring attractivity to the equilibrium point. DOAs are very prevalent, especially in safe stabilization scenarios, and that has motivated an immense amount of research work on computing or approximating DOAs. In this paper, we consider the problem of estimating the state-constrained DOA of a general discrete-time autonomous nonlinear system. 

In the literature, DOAs are predominantly estimated using the framework of Lyapunov functions, where candidate Lyapunov functions of fixed templates (e.g., quadratic forms and sum-of-squares polynomials \cite{packard2010help}) are typically assumed. Then, the parameters of such templates are tuned to satisfy the standard Lyapunov conditions, or the  more relaxed multi-step  and  non-monotonic Lyapunov conditions \cite{ahmadi2008non,bobiti2014computation}.  Lyapunov-based approaches utilizing fixed templates are generally restrictive, providing, if existent, conservative estimates of DOAs \cite{edwards2024fossil}.   

Interestingly, if an initial certifiable DOA estimate is provided (e.g., using quadratic Lyapunov functions),  DOAs can be underapproximated arbitrarily using  iterative computations of the backward reachable set of the initial DOA estimate, where such iterations are guaranteed to converge to the exact DOA \cite{balint2006methods, serry2024safe}. However, the complexity, in terms of the set representation of each DOA estimate, increases  with each iteration, making the resulting estimates impractical in formal verification tasks.

Recently, there has been a growing interest in using learning-based approaches to estimate DOAs, where neural networks are trained to satisfy standard Lyapunov conditions  and then verification tools (e.g., interval arithmetic and mixed-integer programming) are implemented to ensure that the trained neural networks provide certifiable DOA estimates \cite{shi2024certified, yang2024lyapunov,wu2023neural,dai2021lyapunov,chen2021learning, chang2019neural}.  Despite the high computational efficiency associated with training neural networks, neural network verification typically suffers from high computational demands due to state-space discretization. Additionally,  the resulting DOA estimates do not significantly outperform the standard Lyapunov-based approaches using fixed templates. Interestingly, there have been promising developments in the field of neural-network verification, where computationally efficient linear bound propagation  and branch and bound have been utilized, enabling fast and scalable neural network  verification  \cite{bunel2020branch,xu2021fast, wang2021beta,ferrari2022complete,zhou2024scalable}.

  For some classes of  nonlinear systems with local exponential stability properties, DOAs can be  characterized  as  sublevel sets of particular value functions, which are solutions to   functional-type equations: the maximal Lyapunov and  Zubov equations \cite{balint2006methods,giesl2007determination,oshea1964extension,xue2020characterization}. Zubov equations are preferable when estimating DOAs as their solutions are bounded, where these solutions have been typically estimated numerically using discretization-based  approximations \cite{xue2020characterization} and sum-of-squares optimization \cite{xue2020robust}, which are limited to low-dimensional systems. Still, the Zubov-based approaches are advantageous in the sense that, in theory, accurately approximating the solutions to the Zubov equation  provides  large  DOA estimates. 

In this work, motivated by the utilities of neural-network approximations, the advancements in neural network verification, and the theoretical advantages associated with Zubov-based methods, we propose a DOA estimation framework for discrete-time autonomous nonlinear systems  that relies on neural network approximations of solutions to a  new Zubov equation that accounts for state constraints. 

Zubov equations have been developed for discrete-time nonlinear systems without \cite{oshea1964extension} and with \cite{xue2020characterization} state constraints. Interestingly, the framework in \cite{xue2020characterization} even accounts for disturbances. The Zubov equation in \cite{xue2020characterization} contains a non-smooth term  (in terms of the $\min$ function) to account for state constraints. This non-smooth term can make neural network training, which typically relies on gradient-based optimization methods, more challenging. In this work, and by tailoring value functions that correspond to the safe DOA, we present a new Zubov equation that does not possess this non-smooth term, making it more suited for neural network training. In addition, the framework in \cite{xue2020characterization} assumes boundedness of the safe domain in addition to strong Lipschitz-type conditions   on the system's dynamics and the state constraints, which we relax in our framework.

Mere neural network approximate solutions to Zubov equations do not provide certifiable DOA estimates, as neural network training does not account for approximation errors. To obtain certifiable estimates from the neural-network approximations,  we propose a verification framework, which is a discrete-time variation  of the verification framework proposed in \cite{liu2025physics}, where certifiable ellipsoidal DOA estimates and backward reachability computations are employed. This verification framework can be implemented using standard verification tools such as $\alpha,\beta$-CROWN \cite{zhang2018efficient, xu2020automatic, xu2021fast, wang2021beta, zhou2024scalable, shi2024genbab} and dReal \cite{gao2013dreal}.

The organization of this paper is as follows. The necessary preliminaries and notation are introduced in Section \ref{sec:Preliminaries}. The problem setup is introduced in Section \ref{sec:ProblemSetup}. Some properties of the safe DOA are discussed in Section \ref{sec:DOAProperties}. The new value functions are presented in Section \ref{sec:ValueFunctions}. The DOA  characterization in terms of these  value functions is introduced in Section \ref{sec:DOACharcterization}. The properties of the value functions are presented in Section \ref{sec:VFProperties}.  The Zubov and Lyapunov functions corresponding to the value functions are discussed in Section \ref{sec:Zubov}. The neural network approximation is presented in Section \ref{sec:PINN}. The verification framework is presented in Section \ref{sec:NNVerification}. The proposed method is illustrated  through three numerical examples in Section \ref{sec:NumericalExamples}. Finally, the study is concluded in Section \ref{sec:Conclusion}.

\section{Notation and Preliminaries}
\label{sec:Preliminaries}
Let $\mathbb{R}$, $\mathbb{R}_+$,  $\mathbb{Z}$, and $\mathbb{Z}_{+}$ denote
the sets of real numbers, non-negative real numbers, integers, and
non-negative integers, respectively, and
$\mathbb{N} = \mathbb{Z}_{+} \setminus \{ 0 \}$.
Let $\intcc{a,b}$, $\intoo{a,b}$,
$\intco{a,b}$, and $\intoc{a,b}$
denote closed, open and half-open
intervals, respectively, with end points $a$ and $b$, and
 $\intcc{a;b}$, $\intoo{a;b}$,
$\intco{a;b}$, and $\intoc{a;b}$ stand for their discrete counterparts,
e.g.,~$\intcc{a;b} = \intcc{a,b} \cap \mathbb{Z}$, and
$\intco{1;4} = \{ 1,2,3 \}$.  In $\mathbb{R}^{n}
$, the relations $<$, $\leq$, $\geq$, and
$>$ are defined component-wise, e.g., $a < b$, where $a,b\in \mathbb{R}^{n}$, iff $a_i < b_i$ for
all $i\in  \intcc{1;n}$. For $a, b \in \mathbb{R}^n$, $a \leq b$,
the closed hyper-interval (or hyper-rectangle) $\Hintcc{a,b}$ denotes the set $\Set{x\in \mathbb{R}^{n}}{a\leq x\leq b}$. Let $\norm{\cdot}$ and $\norm{\cdot}_{\infty}$ denote the Euclidean and maximal norms on $\mathbb{R}^{n}$, respectively, and $\mathbb{B}_{n}$ be the $n$-dimensional closed unit ball induced by $\norm{\cdot}$. The $n$-dimensional zero vector is denoted by $0_{n}$.  Let $\id_{n}$ denote the $n\times n$ identity matrix. For $A\in \mathbb{R}^{n\times m}$, $\norm{A}$ and  $\norm{A}_{\infty}$ denote the matrix norms of $A$ induced by the Euclidean and maximal norms, respectively. Given $x\in \mathbb{R}^{n}$  and $A\in \mathbb{R}^{n\times m}$, $\abs{x}\in \mathbb{R}^{n}_{+}$ and $\abs{A}\in \mathbb{R}_{+}^{n\times m}$ are defined as $|x|_{i}\defas |x_{i}|,~i\in \intcc{1;n}$, and $\abs{A}_{i,j}\defas \abs{A_{i,j}},~(i,j)\in \intcc{1;n}\times \intcc{1;m}$, respectively. Let $\mathcal{S}^{n}$ denote the set of $n\times n$ real symmetric matrices. Given $A\in \mathcal{S}^{n}$, $\underline{\lambda}(A)$ and $\overline{\lambda}(A)$ denote the minimum and maximum eigenvalues of $A$, respectively.  Let $\mathcal{S}_{++}^{n}$ denote the set of $n\times n$ real symmetric positive definite matrices $\Set{A\in \mathcal{S}^{n}}{\underline{\lambda}(A)>0}$. Given $A\in \mathcal{S}_{++}^{n}$, $A^{\frac{1}{2}}$ denotes the unique real symmetric positive definite matrix $K$ satisfying $A=K^2$ \cite[p.~220]{Abadir2005matrix}.  The interior and the boundary  of $X\subseteq \mathbb{R}^{n}$ are denoted by $\mathrm{int}(X)$ and $\partial X$, respectively. Given $f\colon X\rightarrow Y$ and  $P\subseteq X$, the image  of $f$ on $P$ is defined as  $f(P)\defas\Set{f(x)}{x\in P}$.     Given $f\colon X \rightarrow X$ and  $x\in X$,  $f^{0}(x)\defas x$, and for  $M\in \mathbb{N}$, we define $f^{M}(x)$ recursively as follows:   $f^{k}(x)=f(f^{k-1}(x)),~k\in \intcc{1;M}$.   A subset $S\subseteq X$ is said to be  invariant under a mapping $f\colon X\rightarrow X$ if $f(X)\subseteq X$.

\section{Problem Setup}
\label{sec:ProblemSetup}
Consider the discrete-time system
\begin{equation}\label{eq:System}
x_{k+1}=f(x_{k}),~k\in \mathbb{Z}_{+},
\end{equation}
where $x_{k}\in \mathbb{R}^{n}$ is the state and $f\colon \mathbb{R}^{n}\rightarrow \mathbb{R}^{n}$ is the system's transition function. The trajectory of system \eqref{eq:System} starting from $x_{0}\in \mathbb{R}^{n}$ is the function $\varphi_{x}\colon\mathbb{Z}_{+}\rightarrow \mathbb{R}^{n}$, satisfying:
\begin{align*}
    \varphi_{x}(0)&=x,\\
    \varphi_{x}(k+1)&=f(\varphi_{x}(k))=f^{k+1}(x),~k\in \mathbb{Z}_{+}.
\end{align*}

\begin{assumption}\label{Assumptions}
$f$ is continuous over $\mathbb{R}^{n}$,  $0_{n}$ is an equilibrium point of system \eqref{eq:System} (i.e., $f(0_{n})=0_{n}$), and
    $0_{n}$ is locally exponentially stable. That is, there exist fixed parameters $r\in \intoo{0,\infty}$,  $M\in \intco{1,\infty}$, and $\lambda \in \intoo{0,1}$ such that for all $x\in r\mathbb{B}_{n}$,
$
\norm{\varphi_{x}(k)}\leq M \lambda^{k}\norm{x},~k\in \mathbb{Z}_{+}. 
$   
\end{assumption}
Let $\mathcal{X}\subseteq \mathbb{R}^{n}$ be a safe set.  We make the following assumption. 
\begin{assumption}
$\mathcal{X}$ is open and $0_{n}\in \mathcal{X}$.
\end{assumption}

Define the safe DOA inside $\mathcal{X}$ as 
\begin{equation*}
\mathcal{D}_{0}^{\mathcal{X}}\defas\Set{x\in \mathcal{X}}{\varphi_{x}(k)\in \mathcal{X},\, \forall k\in \mathbb{Z}_{+}, \,\lim_{k\rightarrow \infty} \varphi_{x}(k)=0_{n}}.
\end{equation*}

Any invariant subset of $\mathcal{D}_{0}^{\mathcal{X}}$ under $f$, containing $0_{n}$ in its interior  is called a safe region of attraction (ROA) in $\mathcal{X}$. Our goal is to compute a large safe ROA that  closely approximates  $\mathcal{D}_{0}^{\mathcal{X}}$.

\section{Properties of the DOA}
\label{sec:DOAProperties}
In this section, we introduce some important properties of the safe DOA $\mathcal{D}_{0}^{\mathcal{X}}$, which will be utilized in the proofs of the main results of this work.

\begin{theorem}
    The set $\mathcal{D}_{0}^{\mathcal{X}}\subseteq \mathcal{X}$ is nonempty, invariant under $f$, and open.
\end{theorem}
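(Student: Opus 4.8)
The plan is to treat the three claims separately, in increasing order of difficulty, reserving the real work for openness. The common tool I would build first is a small \emph{attracting ball} around the origin that already lies inside $\mathcal{D}_0^{\mathcal{X}}$. Since $\mathcal{X}$ is open with $0_n\in\mathcal{X}$, pick $\rho>0$ with the closed ball $\rho\mathbb{B}_{n}\subseteq\mathcal{X}$, and set $\delta\defas\min\{r,\rho/M\}>0$. For $\norm{x}<\delta$ we have $x\in r\mathbb{B}_{n}$, so the local exponential estimate and $\lambda<1$ give $\norm{\varphi_{x}(k)}\leq M\lambda^{k}\norm{x}\leq M\norm{x}<\rho$ for every $k$; hence $\varphi_{x}(k)\in\mathcal{X}$ for all $k$ and $\varphi_{x}(k)\to 0_n$. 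Thus the open ball $W_0\defas\Set{x\in\mathbb{R}^{n}}{\norm{x}<\delta}$ satisfies $W_0\subseteq\mathcal{D}_0^{\mathcal{X}}$, which in particular settles nonemptiness.

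For invariance I would use the trajectory-shift identity $\varphi_{f(x)}(k)=f^{k}(f(x))=f^{k+1}(x)=\varphi_{x}(k+1)$. Given $x\in\mathcal{D}_0^{\mathcal{X}}$ and $y\defas f(x)$, this identity yields $\varphi_{y}(k)=\varphi_{x}(k+1)\in\mathcal{X}$ for all $k\in\mathbb{Z}_{+}$ (because $k+1\geq 1$) and $\lim_{k}\varphi_{y}(k)=\lim_{k}\varphi_{x}(k+1)=0_n$; also $y=\varphi_{x}(1)\in\mathcal{X}$. Hence $y\in\mathcal{D}_0^{\mathcal{X}}$, proving $f(\mathcal{D}_0^{\mathcal{X}})\subseteq\mathcal{D}_0^{\mathcal{X}}$.

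Openness is the hard part, and the obstacle is that membership in $\mathcal{D}_0^{\mathcal{X}}$ encodes infinitely many constraints ($\varphi_{x}(k)\in\mathcal{X}$ for every $k$), whereas continuity controls only finitely many iterates at once. The plan is to collapse the infinite condition to a finite one via the attracting ball $W_0$. Fix $x_0\in\mathcal{D}_0^{\mathcal{X}}$; since $\varphi_{x_0}(k)\to 0_n$, choose $N\in\mathbb{Z}_{+}$ with $\varphi_{x_0}(N)\in W_0$. As each $f^{k}$ is continuous and $W_0,\mathcal{X}$ are open, the preimages $(f^{N})^{-1}(W_0)$ and $(f^{k})^{-1}(\mathcal{X})$ for $k\in\intcc{0;N-1}$ are open and all contain $x_0$ (the latter because $\varphi_{x_0}(k)\in\mathcal{X}$). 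Their finite intersection $W$ is an open neighborhood of $x_0$. For $x\in W$, I would observe that $\varphi_{x}(k)\in\mathcal{X}$ for $k\in\intcc{0;N-1}$ directly, while $\varphi_{x}(N)\in W_0\subseteq\mathcal{D}_0^{\mathcal{X}}$ combined with the shift identity forces $\varphi_{x}(N+j)=\varphi_{\varphi_{x}(N)}(j)\in\mathcal{X}$ for all $j\geq 0$ and $\varphi_{x}(k)\to 0_n$. Merging the two ranges gives $x\in\mathcal{D}_0^{\mathcal{X}}$, so $W\subseteq\mathcal{D}_0^{\mathcal{X}}$ and the set is open.

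I expect the only genuine subtlety to be the bookkeeping in this last step, namely verifying that the finitely many preimage conditions, once intersected, actually certify the full infinite-horizon constraint satisfaction together with convergence. By contrast, nonemptiness and invariance should fall out cleanly from the attracting-ball construction and the trajectory-shift identity, respectively.
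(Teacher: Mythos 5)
Your proposal is correct and follows essentially the same route as the paper: both reduce the infinite-horizon membership condition to finitely many constraints by waiting until the trajectory enters a small ball of radius $\min\{r,\theta/M\}$-type around the origin, where the local exponential stability estimate of Assumption \ref{Assumptions} guarantees the tail stays in a ball contained in $\mathcal{X}$ and converges, and then invoke continuity of $f^{0},\dots,f^{N}$ near $x_0$. The only difference is presentational: you package the tail estimate once as an ``attracting ball'' $W_0\subseteq\mathcal{D}_0^{\mathcal{X}}$ and phrase the finite-horizon step via open preimages, whereas the paper inlines the same bounds with explicit $\rho_j$ and $\tilde r/2$ tolerances.
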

\begin{proof}
   The non-emptiness follows from the fact that $0_{n}\in\mathcal{D}_{0}^{\mathcal{X}}$. The invariance property can be deduced as follows: for  $x\in \mathcal{D}_{0}^{\mathcal{X}}$, $\varphi_{x}(k)\in \mathcal{X}~\forall k\in \mathbb{Z}_{+}$ and $\lim_{k\rightarrow \infty}\varphi_{x}(k)=0_{n}$. This implies that  $\varphi_{x}(k+1)=\phi_{f(x)}(k)\in \mathcal{X}~\forall k\in \mathbb{Z}_{+}$ and $\lim_{k\rightarrow \infty}\varphi_{x}(k+1)=\lim_{k\rightarrow \infty}\varphi_{f(x)}(k)=0_{n}$. Hence, $f(x)\in \mathcal{D}_{0}^{\mathcal{X}}$.
   
Now, we prove that $\mathcal{D}_{0}^{\mathcal{X}}$ is open.  Recall the definitions of $M,~r,~\lambda$ in Assumption \ref{Assumptions}. Let $\theta\in \intoo{0,\infty}$ be such that 
    $
\theta \mathbb{B}_{n}\subseteq \mathcal{X},
    $
    which exists due to the openness of $\mathcal{X}$ and the fact that $0_{n}\in \mathcal{X}$. Fix $x_{0}\in \mathcal{D}_{0}^{\mathcal{X}}$, then due to the convergence of $\varphi_{x_{0}}$ to $0_{n}$ within $\mathcal{X}$, there exists $N\in \mathbb{Z}_{+}$ such that  
    $
\varphi_{x_{0}}(j)\in \mathcal{X}~\forall j\in \intcc{0;N-1}
    $
and 
    $
\varphi_{x_{0}}(N)\in \frac{\tilde{r}}{2}\mathbb{B},
    $
where  $\tilde{r}$ satisfies
$
0<\tilde{r}\leq \min \{\frac{\theta}{M}, r\}.
$
Let $\rho_{j},~j\in \intcc{0;N-1}$, be positive numbers satisfying:
$
\varphi_{x_{0}}(j)+\rho_{j}\mathbb{B}_{n}\subset \mathcal{X},~j\in \intcc{0;N-1}.
$
Such numbers exist due to the openness of $\mathcal{X}$.
Note that 
$\varphi_{y}(j)=f^{j}(y)$
for all $y\in \mathbb{R}^{n}$ and $j\in \mathbb{Z}_{+}$.  As $f$ is continuous at $x_{0}$, $f^{2}$, ...  ,$f^{N}$ are also continuous at $x_{0}$. Therefore, there exists $\delta\in \intoo{0,\infty}$ such that, for all $x\in x_{0}+\delta \mathbb{B}_{n}$, 
$
\norm{f^{j}(x)-f^{j}(x_{0})} < \rho_{j},~j\in \intcc{0;N-1}$, and 
$\norm{f^{N}(x)-f^{N}(x_{0})}<\frac{\tilde{r}}{2}$. Consequently, we have for all $x\in x_{0}+\delta \mathbb{B}_{n}$,
$
\varphi_{x}(j)=f^{j}(x)\in \varphi_{x_{0}}(j)+\rho_{j}\mathbb{B}_{n}\subset \mathcal{X},~j\in \intcc{0;N-1},
$
and
$\norm{\varphi_{x}(N)}=\norm{f^{N}(x)}
\leq \norm{f^{N}(x)-f^{N}(x_{0})}+\norm{f^{N}(x_{0})} \leq \tilde{r}\leq r$.  The local exponential stability indicates that, for all $x\in x_{0}+\delta \mathbb{B}_{n}$, 
    $
\norm{f^{N+k}(x)}\leq M  \norm{f^{N}(x)} \lambda^{k}\leq M\tilde{r}\lambda^{k} ,~k\in \mathbb{Z}_{+}.
    $
Hence, {\color{red}$\varphi_{x}(j)=f^{j}(x)\rightarrow 0_{n}$} as $j\rightarrow \infty$ for all $x\in x_{0}+\delta \mathbb{B}_{n}$.
    Also, the local exponential stability and the definition of $\tilde{r}$ imply that, for all $x\in x_{0}+\delta \mathbb{B}_{n}$,
$\norm{f^{N+k}(x)}\leq  M \tilde{r} \leq \theta,~k\in \mathbb{Z}_{+}
\Rightarrow \varphi_{x}(N+k)=f^{N+k}(x)\in \theta \mathbb{B}_{n}\subseteq \mathcal{X}~\forall k\in \mathbb{Z}_{+}$.
    Therefore, $x\in\mathcal{D}_{0}^{\mathcal{X}}$ for all $x\in x_{0}+\delta \mathbb{B}_{n}$. As $x_{0}\in \mathcal{D}_{0}^{\mathcal{X}}$ is arbitrary, the proof is complete.
\end{proof}
\section{Value Functions}
In this section, we introduce the value functions that can be used to characterize $\mathcal{D}_{0}^{\mathcal{X}}$ and to derive Lyapunov and Zubov type equations.
\label{sec:ValueFunctions}
To this end, we
    let $\alpha\colon \mathbb{R}^{n}\rightarrow \mathbb{R}_{+}$ be a positive definite continuous function such that 
$$
\alpha_{m}\norm{x}^{2}\leq \alpha(x)\leq \alpha_{M}\norm{x}^{2},~x\in \mathbb{R}^{n},
$$
for some $\alpha_{m}, \alpha_{M}\in \intoo{0,\infty}$.  The definition of $\alpha$ and the parameters $\alpha_{m}$ and $\alpha_{M}$ are fixed throughout the following discussion. 
\begin{assumption}\label{assump:gamma}
    There exists a function $\gamma \colon \mathbb{R}^{n} \rightarrow \mathbb{R}_{+}\cup\{\infty\}$ satisfying:
    \begin{enumerate}
        \item $\gamma$ is finite and  continuous over $\mathcal{X}$, and there exists $\underline{\gamma}\in \mathbb{R}_{+}\setminus \{0\}$ such that 
        $$
\gamma(x)\geq \underline{\gamma}~\forall x\in \mathcal{X},
        $$
        \item $\gamma(x)=\infty$ whenever $x\notin \mathcal{X}$,
        \item for any sequence $\{x_{n}\}$, with $x_{n}\rightarrow x\in \partial \mathcal{X}$,  $\gamma(x_{n})\rightarrow \infty$.
    \end{enumerate}
\end{assumption}

\begin{remark} \label{rem:transformation}
    If $\mathcal{X}$ is a strict 1-sublevel set of a continuous function $g_{\mathcal{X}}\colon \mathbb{R}^{n}\rightarrow \mathbb{R}$, i.e., $\mathcal{X}=\{x\in \mathbb{R}^{n}|g_{\mathcal{X}}(x)<1\}$, then we can define 
    $$
\gamma(x)=1+\frac{1}{\mathrm{ReLu}(1-g_{\mathcal{X}}(x))},
    $$
    where $1/0\defas \infty$ and $\mathrm{ReLu}\colon \mathbb{R}\rightarrow \mathbb{R}$ is the rectifier linear unit function defined as $\mathrm{ReLu}(x)=(x+|x|)/2,~x\in \mathbb{R}$. With this definition of $\gamma$, the conditions of Assumption \ref{assump:gamma} hold with $\underline{\gamma}=1$. 
\end{remark}

We define the  value functions $\mathcal{V}\colon \mathbb{R}^{n}\rightarrow \mathbb{R}_{+}\cup \{\infty\}$ 
and $\mathcal{W}\colon \mathbb{R}^{n}\rightarrow \intcc{0,1}$ 
as follows: 

\begin{equation}\label{eq:Lyapunov1}
\mathcal{V}(x)\defas \sum_{k=0}^{\infty} \gamma(\varphi_{x}(k))\alpha(\varphi_{x}(k)),
\end{equation}
and
\begin{equation}\label{eq:Lyapunov2}
  \mathcal{W}(x)\defas 1-\exp(-\mathcal{V}(x)),
\end{equation}
where $\exp(-\infty)\defas 0$.

\section{Characterizing the DOA Using the Value Functions}
In this section, we charcterize the safe DOA in terms of the sublevel sets corresponding tho the value functions $\mathcal{V}$ and $\mathcal{W}$.
\label{sec:DOACharcterization}

\begin{theorem}
\begin{align*}
\mathcal{D}_{0}^{\mathcal{X}}&=
\mathbb{V}_{\infty}\defas \Set{x\in \mathbb{R}^{n}}{\mathcal{V}(x)<\infty}\\
&=\mathbb{W}_{1}\defas \Set{x\in \mathbb{R}^{n}}{\mathcal{W}(x)<1}.
\end{align*}
\end{theorem}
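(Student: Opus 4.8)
The plan is to prove the two set equalities separately, starting with the trivial one. Since $\mathcal{W}(x)=1-\exp(-\mathcal{V}(x))$ with the convention $\exp(-\infty)\defas 0$, and $t\mapsto 1-\exp(-t)$ is strictly increasing with value strictly below $1$ exactly when $t<\infty$, the identity $\mathbb{V}_{\infty}=\mathbb{W}_{1}$ is immediate: $\mathcal{W}(x)<1\iff \mathcal{V}(x)<\infty$. The substance of the theorem is therefore the equality $\mathcal{D}_{0}^{\mathcal{X}}=\mathbb{V}_{\infty}$, which I would establish by proving both inclusions.

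For $\mathbb{V}_{\infty}\subseteq \mathcal{D}_{0}^{\mathcal{X}}$, fix $x$ with $\mathcal{V}(x)<\infty$. First I would argue that $\varphi_{x}(k)\in\mathcal{X}$ for every $k$: if some iterate $\varphi_{x}(k_{0})$ left $\mathcal{X}$, then $\varphi_{x}(k_{0})\neq 0_{n}$ (as $0_{n}\in\mathcal{X}$), so $\alpha(\varphi_{x}(k_{0}))\geq \alpha_{m}\norm{\varphi_{x}(k_{0})}^{2}>0$, while condition (2) of Assumption~\ref{assump:gamma} gives $\gamma(\varphi_{x}(k_{0}))=\infty$; the corresponding term would be $\infty$, contradicting $\mathcal{V}(x)<\infty$. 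With invariance in $\mathcal{X}$ established, every term of the convergent series is finite, so the general term tends to $0$; since $\varphi_{x}(k)\in\mathcal{X}$ yields $\gamma(\varphi_{x}(k))\geq \underline{\gamma}>0$ by condition (1), dividing gives $\alpha(\varphi_{x}(k))\to 0$, and the lower bound $\alpha_{m}\norm{\varphi_{x}(k)}^{2}\leq \alpha(\varphi_{x}(k))$ forces $\varphi_{x}(k)\to 0_{n}$. Hence $x\in\mathcal{D}_{0}^{\mathcal{X}}$.

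For the reverse inclusion $\mathcal{D}_{0}^{\mathcal{X}}\subseteq \mathbb{V}_{\infty}$, fix $x\in\mathcal{D}_{0}^{\mathcal{X}}$, so $\varphi_{x}(k)\in\mathcal{X}$ for all $k$ and $\varphi_{x}(k)\to 0_{n}$. The key idea is to split the series at an index $N$ at which the trajectory has entered the exponential-stability ball. The finite head $\sum_{k=0}^{N-1}\gamma(\varphi_{x}(k))\alpha(\varphi_{x}(k))$ is finite because each factor is finite on $\mathcal{X}$. For the tail I would choose $N$ with $\varphi_{x}(N)\in r\mathbb{B}_{n}$ (possible since $\varphi_{x}(k)\to 0_{n}$), so that the local exponential estimate of Assumption~\ref{Assumptions} applied from $\varphi_{x}(N)$ yields $\norm{\varphi_{x}(N+j)}\leq M\lambda^{j}\norm{\varphi_{x}(N)}$; combined with $\alpha(\varphi_{x}(k))\leq \alpha_{M}\norm{\varphi_{x}(k)}^{2}$ this bounds the tail by a convergent geometric series in $\lambda^{2}$. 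To control the $\gamma$ factors uniformly, I would observe that $\Set{\varphi_{x}(k)}{k\in\mathbb{Z}_{+}}\cup\{0_{n}\}$ is a convergent sequence together with its limit, hence compact and contained in $\mathcal{X}$, so the continuous function $\gamma$ attains a finite maximum $C$ on it; replacing each $\gamma(\varphi_{x}(k))$ by $C$ in the tail leaves it summable. Therefore $\mathcal{V}(x)<\infty$.

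The main obstacle is this tail estimate: one must simultaneously control the $\alpha$-factor through exponential decay and the $\gamma$-factor, which Assumption~\ref{assump:gamma} guarantees only to be continuous and finite on $\mathcal{X}$ rather than globally bounded. The compactness of the orbit's closure inside $\mathcal{X}$ is precisely what makes $\gamma$ uniformly bounded along the trajectory and closes the argument; the remaining bookkeeping (the $\infty\cdot 0$ convention in the forward inclusion and the geometric summation in the reverse one) is routine.
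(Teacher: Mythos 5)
Your proposal is correct, and its overall architecture coincides with the paper's: the same reduction of $\mathbb{W}_{1}=\mathbb{V}_{\infty}$ via the strict monotonicity of $t\mapsto 1-\exp(-t)$, the same contradiction argument using $\gamma=\infty$ off $\mathcal{X}$ to force safety of the trajectory, the same comparison-test step with the lower bound $\underline{\gamma}$ to get $\alpha(\varphi_{x}(k))\rightarrow 0$ and hence convergence to $0_{n}$, and the same split of the series at an entry time $N$ into the exponential-stability region with a geometric tail. The one genuinely different ingredient is how you obtain a uniform bound on the $\gamma$-factor along the tail. The paper fixes $\theta$ with $\theta\mathbb{B}_{n}\subseteq\mathcal{X}$, waits until the trajectory enters $\tilde{r}\mathbb{B}_{n}$ with $0<\tilde{r}\leq\min\{\theta/M,r\}$ so that the exponential estimate confines the entire tail to the fixed compact ball $\theta\mathbb{B}_{n}$, and then bounds $\gamma\leq\Gamma_{\theta}$ on that ball; you instead bound $\gamma$ by its maximum $C$ on the compact orbit closure $\Set{\varphi_{x}(k)}{k\in\mathbb{Z}_{+}}\cup\{0_{n}\}\subseteq\mathcal{X}$, which is compact as a convergent sequence together with its limit. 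Both arguments are valid. Yours trims the $\theta,\tilde{r},\Gamma_{\theta}$ bookkeeping, at the price of a trajectory-dependent constant $C$ --- harmless here since the claim is pointwise, though the paper's ball-based constant $\Gamma_{\theta}$ is trajectory-independent and is reused essentially verbatim in the later proofs of continuity of $\mathcal{V}$ and of the boundary blow-up, which is presumably why the authors set it up that way. A small additional merit of your write-up: you explicitly rule out the $\infty\cdot 0$ degeneracy in the forward inclusion by noting that an iterate outside $\mathcal{X}$ cannot be $0_{n}$ (since $0_{n}\in\mathcal{X}$), so $\alpha>0$ there and the offending term is genuinely infinite; the paper leaves this implicit.
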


\begin{proof}
   Due to the one-to-one correspondence between the codomains of $\mathcal{V}$ 
 and $\mathcal{W}$ using equation \eqref{eq:Lyapunov2}, it is sufficient to only show that $\mathbb{V}_{\infty}=\mathcal{D}_{0}^{\mathcal{X}}$. Let $x\in \mathbb{V}_{\infty}$. We will show that ${\varphi}_{x}(k)\in \mathcal{X}$ for all $k\in \mathbb{Z}_{+}$.  By contradiction, assume that ${\varphi}_{x}(N)\notin \mathcal{X}$ for some $N\in \mathbb{Z}_{+}$. Then by the definition of $\gamma$, 
    $
\gamma({\varphi}_{x}(N))=\infty
    $
    implying that $\mathcal{V}(x)=\infty$, and that contradicts the fact that $x\in \mathbb{V}_{\infty}$. Hence, we have $\varphi_{x}(k)\in \mathcal{X}$ for all $k\in \mathbb{Z}_{+}$. Using the lower bound on $\gamma$, we have 
$
\alpha({\varphi}_{x}(k))\leq \gamma({\varphi}_{x}(k)) \alpha({\varphi}_{x}(k))/\underline{\gamma},~k\in \mathbb{Z}_{+}.
$
As $\sum_{k=0}^{\infty}\gamma({\varphi}_{x}(k)) \alpha({\varphi}_{x}(k))$ is convergent,
then by the  comparison test, {\color{red}$\sum_{k=0}^{\infty} \alpha({\varphi}_{x}(k))$} is also convergent, implying  $\lim_{k\rightarrow \infty} \alpha({\varphi}_{x}(k))=0$. Using the lower bound on $\alpha$, we have
    $
\lim_{k\rightarrow \infty} \norm{\varphi_{x}(k)}^{2}\leq \lim_{k\rightarrow \infty} \frac{1}{\alpha_{m}}\alpha(\varphi_{x}(k))=0.
    $
    Therefore, $x\in\mathcal{D}_{0}^{\mathcal{X}}$.

Now, let $x\in\mathcal{D}_{0}^{\mathcal{X}}$ and  recall the definitions of $M,~r,~\lambda$ in Assumption \ref{Assumptions}. Note that $0<\underline{\gamma}\leq \gamma(\varphi_{x}(j))<\infty$ for all $j\in \mathbb{Z}_{+}$ as $\varphi_{x}(j)\in \mathcal{X},~j\in \mathbb{Z}_{+}$.   Let $\theta\in \intoo{0,\infty}$ be such that 
    $
\theta \mathbb{B}_{n}\subseteq \mathcal{X},
    $
    which exists due to the openness of $\mathcal{X}$ and the fact that $0_{n}\in \mathcal{X}$. Due to the convergence of $\varphi_{x} $ to $0_{n}$, there exists $N\in \mathbb{Z}_{+}$ such that 
$
\varphi_{x}(N)=y\in \tilde{r}\mathbb{B}_{n},
$
where 
$
0<\tilde{r}\leq \min \{\theta/M,r\}.
$
As $\varphi_{x}(N)\in {r}\mathbb{B}_{n}$, the local exponential stability and the definition of $\tilde{r}$ imply that 
$\norm{\varphi_{x}(N+k)}=\norm{\varphi_{y}(k)}\leq M \lambda^{k}\norm{y}
\leq M \lambda^{k}\tilde{r}\leq  \lambda^{k}\theta\leq \theta  ,~k\in \mathbb{Z}_{+}$.
Hence, 
$
\alpha(\varphi_{x}(N+k))\leq \alpha_{M}\norm{\varphi_{x}(N+k)}^{2}\leq \alpha_{M}\lambda^{2k}\theta^{2},~k\in \mathbb{Z}_{+}.
$ Let $\Gamma_{\theta }\in \intoo{0,\infty}$ be such that 
$
\gamma(x)\leq \Gamma_{\theta }~\forall x\in \theta \mathbb{B}_{n},
$
which exists due to the compactness of $\theta \mathbb{B}_{n}$ and the continuity of $\gamma$ over $\theta \mathbb{B}_{n}$. Consequently, we have  
$
\mathcal{V}(x)=\sum_{k=0}^{\infty}\gamma({\varphi}_{x}(k))\alpha({\varphi}_{x}(k))=\sum_{k=0}^{N-1}\gamma({\varphi}_{x}(k))\alpha({\varphi}_{x}(k))+\sum_{k=N}^{\infty}\gamma({\varphi}_{x}(k))\alpha({\varphi}_{x}(k)) \leq \sum_{k=0}^{N-1}\gamma({\varphi}_{x}(k))\alpha({\varphi}_{x}(k))+\alpha_{M}\Gamma_{\theta}\theta^{2}\sum_{k=0}^{\infty}\lambda^{2k} \leq \sum_{k=0}^{N-1}\gamma({\varphi}_{x}(k))\alpha({\varphi}_{x}(k))+\alpha_{M}\Gamma_{\theta}\theta^{2}\frac{1}{1-\lambda^{2}}<\infty.
$   
Hence, $x\in \mathbb{V}_{\infty}$, and that completes the proof.
\end{proof}
\section{Properties of the Value Functions}
\label{sec:VFProperties}
In this section, we state some important properties for the functions $\mathcal{V}$ and $\mathcal{W}$.

\begin{lemma}
    The functions $\mathcal{V}$ and $\mathcal{W}$ are positive definite.
\end{lemma}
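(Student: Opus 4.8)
The plan is to establish positive definiteness for $\mathcal{V}$ first and then transfer the conclusion to $\mathcal{W}$ through the relation \eqref{eq:Lyapunov2}. Recall that a function is positive definite (with respect to $0_{n}$) precisely when it vanishes at $0_{n}$ and is strictly positive elsewhere. The key structural observation is that every summand $\gamma(\varphi_{x}(k))\alpha(\varphi_{x}(k))$ in \eqref{eq:Lyapunov1} is nonnegative, since $\gamma$ takes values in $\mathbb{R}_{+}\cup\{\infty\}$ and $\alpha$ is nonnegative. Hence the partial sums are monotone, which legitimizes bounding $\mathcal{V}(x)$ from below by any single term, in particular the $k=0$ term $\gamma(x)\alpha(x)$.

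For the vanishing at the equilibrium, I would use $f(0_{n})=0_{n}$ from Assumption \ref{Assumptions} to conclude $\varphi_{0_{n}}(k)=0_{n}$ for all $k\in \mathbb{Z}_{+}$. Positive definiteness of $\alpha$ gives $\alpha(0_{n})=0$, while $\gamma(0_{n})$ is finite because $0_{n}\in \mathcal{X}$; so every summand equals $\gamma(0_{n})\cdot 0 = 0$, giving $\mathcal{V}(0_{n})=0$.

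For strict positivity away from the origin, I would fix $x\neq 0_{n}$ and isolate the $k=0$ term $\gamma(x)\alpha(x)$. Since $0_{n}\in \mathcal{X}$, any $x\neq 0_{n}$ has $\alpha(x)\geq \alpha_{m}\norm{x}^{2}>0$. Two cases for $\gamma(x)$ arise: if $x\in \mathcal{X}$ then $\gamma(x)\geq \underline{\gamma}>0$, and if $x\notin \mathcal{X}$ then $\gamma(x)=\infty$. In either case $\gamma(x)\alpha(x)>0$ (equal to $\infty$ in the second case), so by the monotone-partial-sum lower bound, $\mathcal{V}(x)\geq \gamma(x)\alpha(x)>0$.

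Finally, I would transfer the property to $\mathcal{W}$. The map $t\mapsto 1-\exp(-t)$ is strictly increasing on $\intco{0,\infty}$, sends $0$ to $0$, and tends to $1$ as $t\to\infty$ (using $\exp(-\infty)\defas 0$). Thus $\mathcal{W}(0_{n})=0$, and for $x\neq 0_{n}$ the strict positivity $\mathcal{V}(x)>0$ yields $\exp(-\mathcal{V}(x))<1$, i.e.\ $\mathcal{W}(x)>0$. I do not anticipate a genuine obstacle; the only care required is bookkeeping with the extended-real conventions (the $\infty$ values of $\gamma$ and $\mathcal{V}$, and $\exp(-\infty)=0$) and the nonnegativity of all summands that permits bounding the full series below by its first term.
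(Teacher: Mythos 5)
Your proof is correct and follows the same route the paper intends: the paper simply declares the lemma ``an immediate consequence of the definitions,'' and your argument is exactly the definitional check being alluded to (first-term lower bound via nonnegativity of the summands, $\varphi_{0_{n}}(k)=0_{n}$ with $\alpha(0_{n})=0$ and $\gamma(0_{n})<\infty$, then monotonicity of $t\mapsto 1-\exp(-t)$ to transfer to $\mathcal{W}$). Your handling of the extended-real conventions is sound, since the problematic product $\infty\cdot 0$ never occurs ($\gamma(x)=\infty$ forces $x\notin\mathcal{X}$, hence $x\neq 0_{n}$ and $\alpha(x)>0$).
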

This is an immediate consequence of the definitions.
% \begin{proof}
% This follows from the fact that $\mathcal{V}(x)\geq \underline{\gamma}\alpha(x),~x\in \mathbb{R}^{n}$, the positive defniteness of $\alpha$, the fact that $V(0_{n})=0$, and the definition of $\mathcal{W}$ in \eqref{eq:Lyapunov2}.  
% \end{proof}

\begin{theorem} 
    $\mathcal{V}$ is continuous over $\mathcal{D}_{0}^{\mathcal{X}}$.
\end{theorem}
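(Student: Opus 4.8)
The plan is to exhibit $\mathcal{V}$ as a locally uniform limit of continuous functions and then invoke the uniform limit theorem. Write $\mathcal{V}(x) = \sum_{k=0}^{\infty} g_{k}(x)$, where $g_{k}(x)\defas \gamma(\varphi_{x}(k))\,\alpha(\varphi_{x}(k))$, and let $S_{m}(x)\defas \sum_{k=0}^{m} g_{k}(x)$ denote the partial sums. Since $\varphi_{x}(k)=f^{k}(x)$ is a composition of the continuous map $f$ and hence continuous in $x$, and since $\alpha$ is continuous on $\mathbb{R}^{n}$ while $\gamma$ is continuous on $\mathcal{X}$, each $g_{k}$ is continuous at any point whose first $k$ iterates lie in $\mathcal{X}$; in particular every $S_{m}$ is continuous on any neighborhood on which the finite-horizon trajectory stays in $\mathcal{X}$. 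It therefore suffices to fix an arbitrary $x_{0}\in \mathcal{D}_{0}^{\mathcal{X}}$ and show that the series converges uniformly on some ball $x_{0}+\delta \mathbb{B}_{n}$; continuity of $\mathcal{V}$ at $x_{0}$ then follows, and $x_{0}$ being arbitrary yields continuity on all of $\mathcal{D}_{0}^{\mathcal{X}}$.

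To produce such a neighborhood I would reuse the estimates already established in the proof that $\mathcal{D}_{0}^{\mathcal{X}}$ is open and in the proof that $\mathcal{V}(x_{0})<\infty$. Recall $M,r,\lambda$ from Assumption~\ref{Assumptions} and pick $\theta\in \intoo{0,\infty}$ with $\theta \mathbb{B}_{n}\subseteq \mathcal{X}$. Exactly as in those arguments, by convergence of $\varphi_{x_{0}}$ to $0_{n}$ together with local exponential stability and continuity of the iterates $f^{0},\dots,f^{N}$ at $x_{0}$, there exist $N\in \mathbb{Z}_{+}$, a radius $\tilde r$ with $0<\tilde r\le \min\{\theta/M, r\}$, and $\delta\in \intoo{0,\infty}$ such that for every $x\in x_{0}+\delta \mathbb{B}_{n}$ one has $\varphi_{x}(j)\in \mathcal{X}$ for $j\in \intcc{0;N-1}$, $\norm{\varphi_{x}(N)}\le \tilde r\le r$, and consequently, by exponential stability, $\norm{\varphi_{x}(N+j)}\le M\tilde r\lambda^{j}\le \theta$ for all $j\in \mathbb{Z}_{+}$. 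Thus on this whole ball the entire trajectory remains in $\mathcal{X}$, so each $S_{m}$ is continuous there, and after the $N$-th step the trajectory is trapped in $\theta \mathbb{B}_{n}$.

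With these bounds in hand the uniform tail estimate is the same geometric argument used to prove $\mathcal{V}(x_{0})<\infty$. Let $\Gamma_{\theta}\in \intoo{0,\infty}$ bound $\gamma$ on the compact set $\theta \mathbb{B}_{n}$ (continuity plus compactness). For $k=N+j$ with $j\ge 0$ and any $x\in x_{0}+\delta \mathbb{B}_{n}$,
$$
g_{k}(x)\le \Gamma_{\theta}\,\alpha_{M}\norm{\varphi_{x}(k)}^{2}\le \Gamma_{\theta}\,\alpha_{M}M^{2}\tilde r^{2}\lambda^{2j},
$$
so for any $m\ge N$ the tail is dominated by $\sum_{k=m}^{\infty} g_{k}(x)\le \Gamma_{\theta}\alpha_{M}M^{2}\tilde r^{2}\,\lambda^{2(m-N)}/(1-\lambda^{2})$, a bound independent of $x$ that tends to $0$ as $m\to\infty$. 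This is precisely a localized Weierstrass $M$-test, giving uniform convergence of $S_{m}\to\mathcal{V}$ on $x_{0}+\delta \mathbb{B}_{n}$, and hence continuity of $\mathcal{V}$ at $x_{0}$.

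I expect the only real obstacle to be packaging the neighborhood construction cleanly: the uniform tail bound hinges on obtaining a single triple $(N,\delta,\tilde r)$ that simultaneously keeps the finite-horizon trajectory in $\mathcal{X}$ and forces the late iterates into $\theta \mathbb{B}_{n}$ for \emph{every} $x$ in the ball, uniformly. This uniformity is exactly what the continuity of $f^{0},\dots,f^{N}$ at $x_{0}$ delivers and is already extracted in the earlier proofs, so the argument should reduce to quoting those estimates and then applying the uniform limit theorem; the remaining steps (continuity of each $g_{k}$ and the geometric summation) are routine.
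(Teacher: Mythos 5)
Your proof is correct, and it rests on the same two pillars as the paper's own argument: the uniform neighborhood construction (a single triple $(N,\tilde r,\delta)$ forcing every trajectory starting in $x_{0}+\delta\mathbb{B}_{n}$ to remain in $\mathcal{X}$ and to be trapped in $\theta\mathbb{B}_{n}$ from step $N$ onward) and the geometric tail bound obtained from $\Gamma_{\theta}$, $\alpha_{M}$, and local exponential stability. Where you genuinely diverge is in the top-level packaging. The paper runs a direct $\varepsilon$--$\delta$ estimate on $\abs{\mathcal{V}(x_{0})-\mathcal{V}(x)}$ with an $\varepsilon$-dependent horizon: it chooses $N=N(\varepsilon)$ so that $\varphi_{x_{0}}(N)\in(\varepsilon/2)\mathbb{B}_{n}$ (with $\varepsilon\leq\min\{\theta/M,r\}$), picks $\delta$ so that the finite sum over $\intcc{0;N-1}$ changes by less than $\varepsilon$ and $\norm{\varphi_{x}(N)}\leq\varepsilon$, and then bounds \emph{both} tails by $\Gamma_{\theta}\alpha_{M}M^{2}\varepsilon^{2}/(1-\lambda^{2})$, yielding the three-term bound $\varepsilon+2\Gamma_{\theta}\alpha_{M}M^{2}\varepsilon^{2}/(1-\lambda^{2})$. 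You instead fix $(N,\tilde r,\delta)$ once and for all, dominate $g_{N+j}$ by $\Gamma_{\theta}\alpha_{M}M^{2}\tilde r^{2}\lambda^{2j}$ uniformly on the ball, and conclude via a localized Weierstrass $M$-test and the uniform limit theorem. Your route buys a mildly stronger conclusion --- locally uniform convergence of the partial sums $S_{m}\rightarrow\mathcal{V}$ on $\mathcal{D}_{0}^{\mathcal{X}}$, not merely continuity --- and cleaner bookkeeping, since the smallness comes from letting $m$ grow rather than from re-choosing $N$ for each $\varepsilon$; the paper's version is more self-contained (no appeal to the uniform limit theorem) and directly quantitative in $\varepsilon$. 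One point you state a bit loosely but which your construction does in fact cover: continuity of each $g_{k}$ (hence of each $S_{m}$) on the ball requires $\gamma$ to be evaluated only at points of $\mathcal{X}$, and it is the iterates $\varphi_{x}(0),\dots,\varphi_{x}(k)$ (all $k+1$ of them) that must lie in $\mathcal{X}$; your trapping argument guarantees the \emph{entire} trajectory of every $x$ in the ball stays in $\mathcal{X}$, so this is not a gap, only a phrasing to tighten.
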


\begin{proof}
     Recall the definitions of $M,~r,~\lambda$ in Assumption \ref{Assumptions}. Let $\theta\in \intoo{0,\infty}$ be such that 
    $
\theta \mathbb{B}_{n}\subseteq \mathcal{X},
    $
    which exists due to the openness of $\mathcal{X}$ and the fact that $0_{n}\in \mathcal{X}$. Let $x_{0}\in\mathcal{D}_{0}^{\mathcal{X}}$ and $\varepsilon>0$ be arbitrary, where we assume without loss of generality that $\varepsilon\leq \min\{\theta/M,r\}$. Then there exists $N\in \mathbb{Z}_{+}$ such that $\varphi_{x_{0}}(N)\in \frac{\varepsilon}{2}\mathbb{B}_{n}$, where the exponential stability indicates that 
     $
\norm{\varphi_{x_{0}}(N+k)}\leq M \varepsilon \lambda^{k}\leq \theta ,~k\in \mathbb{Z}_{+}.
     $
   Let $\Gamma_{\theta }\in \intoo{0,\infty}$ be such that 
$
\gamma(x)\leq \Gamma_{\theta }~\forall x\in \theta \mathbb{B}_{n},
$
which exists due to the compactness of $\theta \mathbb{B}_{n}$ and the continuity of $\gamma$ over $\theta \mathbb{B}_{n}$. Consequently, we have
  $
\sum_{k=N}^{\infty}\gamma (\varphi_{x_{0}}(k)) \alpha (\varphi_{x_{0}}(k))\leq \sum_{k=0}^{\infty} \Gamma_{\theta}\alpha_{M} \norm{\varphi_{x_{0}}(N+k)}^{2}
\leq \Gamma_{\theta}\alpha_{M}\frac{M^{2}\varepsilon^{2}}{1-\lambda^{2}}.
   $
Let $\delta>0$ be such that $x_{0}+\delta \mathbb{B}_{n}\subset\mathcal{D}_{0}^{\mathcal{X}}$ and, for all $x\in x_{0}+\delta \mathbb{B}_{n}$, 
$
    \abs{\sum_{k=0}^{N-1}\gamma(\varphi_{x_{0}}(k))\alpha(\varphi_{x_{0}}(k))- \gamma(\varphi_{x}(k))\alpha(\varphi_{x}(k))}< \varepsilon$, and $
    \norm{\varphi_{x_{0}}(N)-\varphi_{x}(N)}\leq  \frac{\varepsilon}{2}$.
   Such $\delta$ exists due to the openness of $\mathcal{D}_{0}^{\mathcal{X}}$, the continuity of $f$ (and consequently the continuity of $f^{i},~i\in\intcc{1;N}$),  the continuity of $\alpha$ and the continuity of $\gamma$ over $\mathcal{X}$.
Consequently, for all $x\in x_{0}+\delta \mathbb{B}_{n}$, 
$
\norm{\varphi_{x}(N)}\leq \norm{\varphi_{x_{0}}(N)-\varphi_{x}(N)}+\norm{\varphi_{x_{0}}(N)}\leq \varepsilon.
$
The exponential stability indicates that, for all $x\in x_{0}+\delta \mathbb{B}_{n}$, 
   $
\norm{\varphi_{x}(N+k)}\leq M \varepsilon \lambda^{k}\leq \theta ,~k\in \mathbb{Z}_{+}.
   $
   Therefore, 
   $
\sum_{k=N}^{\infty} \gamma (\varphi_{x}(k)) \alpha (\varphi_{x}(k))\leq \sum_{k=0}^{\infty} \Gamma_{\theta}\alpha_{M} \norm{\varphi_{x_{0}}(N+k)}^{2}\leq \Gamma_{\theta}\alpha_{M}\frac{M^{2}\varepsilon^{2}}{1-\lambda^{2}}~\forall x\in x_{0}+\delta \mathbb{B}_{n}$.    Finally we have, for all $x\in x_{0}+\delta \mathbb{B}_{n}$ ($\mathcal{V}(x)$ is well-defined for $x\in x_{0}+\delta \mathbb{B}_{n}$),
 $
       \abs{\mathcal{V}(x_{0})-\mathcal{V}(x)} \leq
 \abs{\sum_{k=0}^{N-1}\gamma(\varphi_{x_{0}}(k))\alpha(\varphi_{x_{0}}(k))-\gamma(\varphi_{x}(k)) \alpha(\varphi_{x}(k))}
       +\sum_{k=N}^{\infty}\gamma (\varphi_{x_{0}}(k)) \alpha (\varphi_{x_{0}}(k))
       +\sum_{k=N}^{\infty} \gamma (\varphi_{x}(k)) \alpha (\varphi_{x}(k))
        \leq \varepsilon+ \Gamma_{\theta}\alpha_{M}\frac{M^{2}\varepsilon^{2}}{1-\lambda^{2}}+\Gamma_{\theta}\alpha_{M}\frac{M^{2}\varepsilon^{2}}{1-\lambda^{2}}$. As $\varepsilon$ is arbitrary, the proof is complete.
\end{proof}
\begin{theorem}  $\mathcal{V}(x_{k})\rightarrow \infty$ whenever $x_{k}\rightarrow x\in \partial\mathcal{D}_{0}^{\mathcal{X}}$,
\end{theorem}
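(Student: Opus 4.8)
The plan is to exploit the characterization $\mathcal{D}_{0}^{\mathcal{X}}=\mathbb{V}_{\infty}$ together with the openness of $\mathcal{D}_{0}^{\mathcal{X}}$. Since $\mathcal{D}_{0}^{\mathcal{X}}$ is open, a boundary point $x_{*}\in\partial\mathcal{D}_{0}^{\mathcal{X}}$ satisfies $x_{*}\notin\mathcal{D}_{0}^{\mathcal{X}}$, so $\mathcal{V}(x_{*})=\infty$. I would fix an arbitrary sequence $x_{k}\to x_{*}$ and an arbitrary threshold $C>0$ and produce an index $K$ with $\mathcal{V}(x_{k})>C$ for all $k\ge K$, which gives $\mathcal{V}(x_{k})\to\infty$. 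The central idea is that every summand $\gamma(\varphi_{x}(j))\alpha(\varphi_{x}(j))$ is nonnegative, so $\mathcal{V}$ dominates any finite partial sum and any individual term; it therefore suffices to make one finite partial sum, or one single term, exceed $C$ for all large $k$. I would organize the argument around the dichotomy of whether the trajectory emanating from $x_{*}$ ever leaves $\mathcal{X}$.

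First, suppose $\varphi_{x_{*}}(j)\in\mathcal{X}$ for all $j\in\mathbb{Z}_{+}$. Then every summand at $x_{*}$ is finite, yet their total is $\mathcal{V}(x_{*})=\infty$, so the partial sums $S_{N}(x_{*})\defas\sum_{j=0}^{N-1}\gamma(\varphi_{x_{*}}(j))\alpha(\varphi_{x_{*}}(j))$ diverge to $\infty$. I would pick $N$ with $S_{N}(x_{*})>C+1$. Because $f$ (hence each $f^{j}$) is continuous, $\alpha$ is continuous, $\gamma$ is continuous on $\mathcal{X}$, and each $\varphi_{x_{*}}(j)$ lies in the open set $\mathcal{X}$ for $j\in\intcc{0;N-1}$, the finite sum $S_{N}$ is continuous at $x_{*}$; thus for $k$ large one has $\varphi_{x_{k}}(j)\in\mathcal{X}$ for $j\in\intcc{0;N-1}$ and $S_{N}(x_{k})>C$. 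Nonnegativity of the remaining terms then yields $\mathcal{V}(x_{k})\ge S_{N}(x_{k})>C$.

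Second, suppose there is a least index $N_{0}$ with $\varphi_{x_{*}}(N_{0})=f^{N_{0}}(x_{*})\notin\mathcal{X}$ (this subsumes the case $x_{*}\in\partial\mathcal{X}$, where $N_{0}=0$). If $f^{N_{0}}(x_{*})$ lies in the exterior of $\mathcal{X}$, then, that exterior being open and $f^{N_{0}}$ continuous, $f^{N_{0}}(x_{k})\notin\mathcal{X}$ for all large $k$, whence $\gamma(\varphi_{x_{k}}(N_{0}))=\infty$ and $\mathcal{V}(x_{k})=\infty$. If instead $f^{N_{0}}(x_{*})\in\partial\mathcal{X}$, then $f^{N_{0}}(x_{k})\to f^{N_{0}}(x_{*})$, and splitting this sequence into its part inside $\mathcal{X}$ (where Assumption \ref{assump:gamma}(3) forces $\gamma\to\infty$) and its part outside $\mathcal{X}$ (where $\gamma=\infty$ by definition) gives $\gamma(\varphi_{x_{k}}(N_{0}))\to\infty$. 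Since $0_{n}\in\mathrm{int}(\mathcal{X})$, the boundary point satisfies $f^{N_{0}}(x_{*})\neq 0_{n}$, so $\alpha(f^{N_{0}}(x_{k}))\to\alpha(f^{N_{0}}(x_{*}))>0$; hence the single term $\gamma(\varphi_{x_{k}}(N_{0}))\alpha(\varphi_{x_{k}}(N_{0}))\to\infty$ and again $\mathcal{V}(x_{k})$ eventually exceeds $C$.

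The main obstacle I anticipate is the case where $\varphi_{x_{*}}$ stays in $\mathcal{X}$ yet fails to converge to $0_{n}$: here no single term blows up, and the divergence of $\mathcal{V}(x_{*})$ must be transferred to the nearby $x_{k}$ purely through continuity of truncations together with the \emph{divergence} of the partial sums. One must choose the truncation depth $N$ as a function of $C$ \emph{before} invoking continuity, and confirm that the trajectory segment stays in the open set $\mathcal{X}$ long enough for $S_{N}$ to be continuous at $x_{*}$. The remaining care points are routine: checking $x_{*}\neq 0_{n}$ and $f^{N_{0}}(x_{*})\neq 0_{n}$ (both from $0_{n}\in\mathrm{int}(\mathcal{X})$) so that $\alpha$ does not vanish, and correctly combining the inside/outside parts of a sequence approaching $\partial\mathcal{X}$ when applying the $\gamma$-blowup property. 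The companion statement $\mathcal{W}(x_{k})\to 1$ then follows at once from $\mathcal{W}=1-\exp(-\mathcal{V})$.
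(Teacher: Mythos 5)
Your proposal is correct, and it takes a genuinely different route from the paper's proof. You case-split on the behavior of the \emph{limit} trajectory $\varphi_{x}$: either it remains in $\mathcal{X}$ forever, in which case you invoke the previously established facts that $\mathcal{D}_{0}^{\mathcal{X}}$ is open and $\mathcal{D}_{0}^{\mathcal{X}}=\mathbb{V}_{\infty}$ to conclude $\mathcal{V}(x)=\infty$, note that nonnegativity and finiteness of the summands force the partial sums $S_{N}(x)$ to diverge, and transfer the divergence to nearby points through continuity of the finite truncation $S_{N}$ (valid because the first $N$ trajectory points lie in the open set $\mathcal{X}$, where $\gamma$ is finite and continuous); or the trajectory exits $\mathcal{X}$ at a least time $N_{0}$, in which case a single term blows up, either via openness of the exterior of $\mathcal{X}$ or via the boundary blow-up property of $\gamma$ in Assumption \ref{assump:gamma} combined with $\alpha(f^{N_{0}}(x))>0$. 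The paper instead splits on the trajectories of the \emph{approximating} points $x_{k}$: it introduces the first hitting times $T_{k}$ of a small ball $\tilde{r}\mathbb{B}_{n}$ and distinguishes $T_{k}\rightarrow\infty$, where it obtains the explicit quantitative bound $\mathcal{V}(x_{k})\geq \underline{\gamma}\alpha_{m}\tilde{r}^{2}(T_{k}-1)$, from the case of a bounded subsequence $T_{k}\leq T$, where exponential stability traps $\varphi_{x_{k}}$ after time $T$ and, by continuity of $f^{T}$, also traps $\varphi_{x}$; only at that point does the paper use the characterization plus openness (to rule out the limit trajectory being entirely safe) and the $\gamma$-blow-up at the exit point, much as in your second case. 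Each approach buys something: the paper's hitting-time argument yields a divergence \emph{rate} and re-uses the exponential-stability machinery directly, whereas yours is shorter and more modular---stability enters only through the already-proven characterization theorem, so your argument would survive under weaker local stability hypotheses; it also handles arbitrary sequences directly rather than reducing ``without loss of generality'' to sequences inside $\mathcal{D}_{0}^{\mathcal{X}}$, and it makes explicit the small checks the paper leaves implicit (e.g., $f^{N_{0}}(x)\neq 0_{n}$ because $0_{n}\in\mathcal{X}$ and $\mathcal{X}$ is open, so $\alpha$ stays bounded away from zero at the exit point). Since both the openness of $\mathcal{D}_{0}^{\mathcal{X}}$ and the identity $\mathcal{D}_{0}^{\mathcal{X}}=\mathbb{V}_{\infty}$ are proved earlier in the paper without reference to this theorem, your reliance on them introduces no circularity.
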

\begin{proof}
    Without loss of generality, consider a sequence $\{x_{k}\}\subseteq \mathcal{D}_{0}^{\mathcal{X}}$, where $x_{k}\rightarrow x\in \partial \mathcal{D}_{0}^{\mathcal{X}}$. Let $\theta \in \intoo{0,\infty}$ be such that
$
\theta<r$ and 
$\theta\mathbb{B}_{n}\subseteq \mathcal{X}.
    $
    Let $\tilde{r}\in \intoo{0, \theta/M^{2}}   $.
    For each $k\in \mathbb{Z}_{+}$, let $T_{k}\in \mathbb{Z}_{+}$ be the first time instance such that 
    $
\varphi_{x_{k}}(T_{k})\in \tilde{r}\mathbb{B}_{n}.
    $
    If the sequence $\{T_{k}\}$ diverges to $\infty$, then we have
    $
\mathcal{V}(x_{k})\geq \sum_{j=0}^{T_{k}-1}\gamma(\varphi_{x_{k}}(j))\alpha(\varphi_{x_{k}}(j))\geq \underline{\gamma}\alpha_{m}\tilde{r}^{2}(T_{k}-1)\rightarrow \infty
    $ as $k\rightarrow \infty$,
    Assume that $\{T_{k}\}$ does not diverge to $\infty$, then there exists a bounded subsequence, again denoted $\{T_{k}\}$, with an upper bound $T\in \mathbb{Z}_{+}$ such that  
    $
T_{k}\leq T,~k\in \mathbb{Z}_{+}.
    $
It follows, as $\tilde{r}\leq \theta/M^{2}\leq \theta<r$, that  
$
\varphi_{x_{k}}(T_{k})\in \tilde{r} \mathbb{B}_{n}\subseteq r \mathbb{B}_{n},~k\in \mathbb{Z}_{+}.
$
Therefore, 
$
\norm{\varphi_{x_{k}}(T_{k}+j)}\leq  M\lambda^{j}\tilde{r}\leq \theta/M,~j,k\in \mathbb{Z}_{+}. 
$
Hence, 
$
\varphi_{x_{k}}(T)\in (\theta/M) \mathbb{B}_{n}\subseteq r\mathbb{B}_{n},~k\in \mathbb{Z}_{+},
$
implying, using the continuity of $f^{T}(\cdot)$, 
$
\varphi_{x}(T)\in (\theta/M) \mathbb{B}_{n}\subseteq r \mathbb{B}_{n}.
$
Therefore, 
$
\norm{\varphi_{x}(T+j)}\leq  M(\theta/M)\lambda^{j}=\theta \lambda^{j}\leq \theta,~j\in \mathbb{Z}_{+}.
$
This implies that 
$
\varphi_{x}(T+j)\in \mathcal{X},~j\in \mathbb{Z}_{+},
$
and 
$
\varphi_{x}\rightarrow 0_{n}
$
exponentially. If $\varphi_{x}(j)\in \mathcal{X}~\forall j\in \intcc{0;T-1}$, it follows that $\mathcal{V}(x)<\infty$, implying $x$ in an interior point of $\mathcal{D}_{0}^{\mathcal{X}}$, which yields a contradiction. Now, assume $\varphi_{x}(j)=y\in \mathbb{R}^{n}\setminus\mathcal{X}$ for some $j\in \intcc{0;T-1}$. Then, using the continuity of $f^{j}$, it follows that $\varphi_{x_{k}}(j)\rightarrow y$ as $k\rightarrow \infty$. This yields $\lim_{k\rightarrow \infty}\gamma(\varphi_{x_{k}}(j))= \infty$ and consequently $\mathcal{V}(x_{k})\rightarrow\infty$ as $k\rightarrow \infty$.
\end{proof}
\begin{corollary}
   $\mathcal{W}$ is continuous over $\mathbb{R}^{n}$. 
\end{corollary}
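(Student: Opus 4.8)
The plan is to write $\mathcal{W} = h \circ \mathcal{V}$, where $h\colon \intcc{0,\infty} \to \intcc{0,1}$ is defined by $h(v) = 1 - \exp(-v)$ with the convention $h(\infty) = 1$, and to observe that $h$ is continuous on the extended half-line $\intcc{0,\infty}$: it is continuous and strictly increasing on $\intco{0,\infty}$ and satisfies $\lim_{v\to\infty} h(v) = 1 = h(\infty)$. Since $h$ is a fixed continuous map, it suffices to show that $\mathcal{V}$ is continuous as a map from $\mathbb{R}^{n}$ into $\intcc{0,\infty}$ equipped with its usual (order) topology; continuity of $\mathcal{W}$ into $\intcc{0,1}$ then follows immediately by composition. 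Throughout I would use that $\mathcal{D}_{0}^{\mathcal{X}} = \mathbb{V}_{\infty}$ is open and that $\mathcal{V} \equiv \infty$ on its complement, so that $\mathcal{W} \equiv 1$ on $\mathbb{R}^{n}\setminus \mathcal{D}_{0}^{\mathcal{X}}$.

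To verify continuity of $\mathcal{V}$ at an arbitrary $x \in \mathbb{R}^{n}$, I would split into three cases according to the position of $x$ relative to $\mathcal{D}_{0}^{\mathcal{X}}$. First, if $x \in \mathcal{D}_{0}^{\mathcal{X}}$, then by openness $\mathcal{V}$ is finite on a neighborhood of $x$ and continuous there by the theorem asserting continuity of $\mathcal{V}$ over $\mathcal{D}_{0}^{\mathcal{X}}$; hence $\mathcal{V}$ is continuous at $x$ as a map into $\intcc{0,\infty}$. Second, if $x$ lies in the interior of $\mathbb{R}^{n}\setminus \mathcal{D}_{0}^{\mathcal{X}}$, then $\mathcal{V} \equiv \infty$ on a neighborhood of $x$, so $\mathcal{V}$ is locally constant, and correspondingly $\mathcal{W} \equiv 1$ locally.

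The remaining and genuinely delicate case is $x \in \partial \mathcal{D}_{0}^{\mathcal{X}}$, where, by openness of $\mathcal{D}_{0}^{\mathcal{X}}$, we have $x \notin \mathcal{D}_{0}^{\mathcal{X}}$ and therefore $\mathcal{V}(x) = \infty$ and $\mathcal{W}(x) = 1$. Here I would take an arbitrary sequence $x_{k} \to x$ and argue $\mathcal{V}(x_{k}) \to \infty$ in $\intcc{0,\infty}$. Splitting $\{x_{k}\}$ into the terms lying in $\mathcal{D}_{0}^{\mathcal{X}}$ and those lying outside, the outside terms satisfy $\mathcal{V}(x_{k}) = \infty$ automatically, while the inside terms converge to $x \in \partial \mathcal{D}_{0}^{\mathcal{X}}$ and hence force $\mathcal{V}(x_{k}) \to \infty$ by the preceding theorem on blow-up of $\mathcal{V}$ at the boundary. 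Combining the two subsequences yields $\mathcal{V}(x_{k}) \to \infty$, whence $\mathcal{W}(x_{k}) = h(\mathcal{V}(x_{k})) \to h(\infty) = 1 = \mathcal{W}(x)$.

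I expect the main obstacle to be precisely this boundary case, where one must handle mixed sequences combining iterates that already leave $\mathcal{X}$ (so $\mathcal{V} = \infty$ exactly) with iterates that merely have large but finite $\mathcal{V}$. The clean device is the extended-real reformulation above, which reduces the whole statement to continuity of the single-variable map $h$ at $\infty$ together with the already-established boundary blow-up of $\mathcal{V}$, and thereby avoids any direct $\varepsilon$--$\delta$ estimate on $\mathcal{W}$ near $\partial \mathcal{D}_{0}^{\mathcal{X}}$.
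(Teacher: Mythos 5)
Your proof is correct and follows exactly the route the paper intends: the corollary is stated as an immediate consequence of the continuity of $\mathcal{V}$ over the open set $\mathcal{D}_{0}^{\mathcal{X}}$ and the boundary blow-up theorem, combined with $\mathcal{V}\equiv\infty$ (hence $\mathcal{W}\equiv 1$) on the complement of $\mathcal{D}_{0}^{\mathcal{X}}$, which is precisely your three-case decomposition. Your extended-real reformulation via $h(v)=1-\exp(-v)$ on $\intcc{0,\infty}$ is just a clean packaging of that same argument, not a genuinely different approach.
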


\section{Characterizing the Value Functions: Lyapunov and Zubov Equations}
\label{sec:Zubov}
In this section, we derive the Lyapunov and Zubov equations corresponding to the functions $\mathcal{V}$ and $\mathcal{W}$, respectively.

\begin{theorem}\label{Thm:LyapunovEquation}
For all $x\in \mathbb{R}^{n}$, $\mathcal{V}$ satisfies the maximal Lyapunov equation (w.r.t. to the function $v$)
\begin{align}\label{eq:LyapunovEqn}
v(x)%&= \sum_{k=0}^{\infty}\gamma({\varphi}_{x}(k))\alpha ({\varphi}_{x}(k))\\
%& =\gamma(x)\alpha(x)+\sum_{k=1}^{\infty}\gamma ({\varphi}_{x}(k))\alpha ({\varphi}_{x}(k))\\
= \gamma(x)\alpha(x)+v({f}(x)).
\end{align}
\end{theorem}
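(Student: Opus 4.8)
The plan is to derive the identity directly from the series definition \eqref{eq:Lyapunov1} by isolating the $k=0$ summand and reindexing the remainder, using the elementary cocycle property $\varphi_{x}(k+1)=f^{k+1}(x)=f^{k}(f(x))=\varphi_{f(x)}(k)$ for all $k\in \mathbb{Z}_{+}$. Since every summand $\gamma(\varphi_{x}(k))\alpha(\varphi_{x}(k))$ is nonnegative, all partial sums are monotone and the series is well defined as an element of $\intcc{0,\infty}$, so the splitting and reindexing below are legitimate regardless of convergence.

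First I would split off the first term, writing $\mathcal{V}(x)=\gamma(\varphi_{x}(0))\alpha(\varphi_{x}(0))+\sum_{k=1}^{\infty}\gamma(\varphi_{x}(k))\alpha(\varphi_{x}(k))$. Because $\varphi_{x}(0)=x$, the first term is exactly $\gamma(x)\alpha(x)$. For the tail, substituting $k=j+1$ and invoking $\varphi_{x}(j+1)=\varphi_{f(x)}(j)$ yields $\sum_{k=1}^{\infty}\gamma(\varphi_{x}(k))\alpha(\varphi_{x}(k))=\sum_{j=0}^{\infty}\gamma(\varphi_{f(x)}(j))\alpha(\varphi_{f(x)}(j))=\mathcal{V}(f(x))$, which is precisely \eqref{eq:LyapunovEqn}. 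On $\mathcal{D}_{0}^{\mathcal{X}}$ this is a routine rearrangement of a convergent nonnegative series, and invariance of $\mathcal{D}_{0}^{\mathcal{X}}$ under $f$ guarantees that $\mathcal{V}(f(x))$ is finite, so there is nothing delicate in the convergent regime.

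The only point requiring care is the interpretation of the identity when $\mathcal{V}(x)=\infty$, i.e. when $x\notin \mathcal{D}_{0}^{\mathcal{X}}$ by the DOA characterization; here I would verify that both sides equal $\infty$. If $x\notin \mathcal{X}$, then $x\neq 0_{n}$ (since $0_{n}\in \mathcal{X}$), so $\alpha(x)>0$ by positive definiteness while $\gamma(x)=\infty$, whence $\gamma(x)\alpha(x)=\infty$ and the right-hand side is $\infty$. If instead $x\in \mathcal{X}$, then $\gamma(x)\alpha(x)<\infty$, so the divergence of $\mathcal{V}(x)$ must reside in the tail; since the tail equals $\mathcal{V}(f(x))$, this forces $\mathcal{V}(f(x))=\infty$, and again the right-hand side is $\infty$. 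Thus the equation holds in the extended sense for every $x\in \mathbb{R}^{n}$. I expect this case analysis to be the main (and essentially only) obstacle, the convergent case being purely mechanical.
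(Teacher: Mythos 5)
Your proof is correct and takes essentially the same route as the paper: isolate the $k=0$ summand and reindex the tail via $\varphi_{x}(k+1)=\varphi_{f(x)}(k)$, reading the identity in the extended nonnegative reals. Your explicit case analysis for $\mathcal{V}(x)=\infty$ simply spells out what the paper dispatches with the one-line remark that the decomposition remains valid when $\mathcal{V}(x)$ is infinite.
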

\begin{proof}
    Given $x\in \mathbb{R}^{n}$ and the definition of $\mathcal{V}$ in \eqref{eq:Lyapunov1}, we have 
       $ \mathcal{V}(x)= \sum_{k=0}^{\infty}\gamma({\varphi}_{x}(k))\alpha ({\varphi}_{x}(k))=\gamma(x)\alpha(x)+\sum_{k=1}^{\infty}\gamma ({\varphi}_{x}(k))\alpha ({\varphi}_{x}(k))=\gamma(x)\alpha(x)+\sum_{k=0}^{\infty}\gamma ({\varphi}_{f(x)}(k))\alpha ({\varphi}_{f(x)}(k))=\gamma(x)\alpha(x)+\mathcal{V}(f(x))$.
   Note that the above decomposition is valid even if $\mathcal{V}(x)$ is infinite.
\end{proof}

\begin{theorem}\label{thm:ZubovEqn1}
For all $x\in \mathbb{R}^{n}$, $\mathcal{W}$ satisfies the Zubov equation (w.r.t. to the function $w$)
\begin{equation}\label{eq:ZubovEqn1}
w(x)-w({f}(x))=\xi(x)(1-w(f(x))),
\end{equation}
where 
\begin{equation}\label{eq:xi}
\xi(x)\defas 1-\exp(-\gamma(x)\alpha(x)).
\end{equation}
\end{theorem}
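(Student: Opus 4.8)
The plan is to derive the Zubov equation \eqref{eq:ZubovEqn1} directly from the definition \eqref{eq:Lyapunov2} of $\mathcal{W}$ together with the maximal Lyapunov equation already established in Theorem \ref{Thm:LyapunovEquation}. The guiding observation is that the exponential appearing in \eqref{eq:Lyapunov2} converts the \emph{additive} one-step decomposition of $\mathcal{V}$ into a \emph{multiplicative} one for $1-\mathcal{W}=\exp(-\mathcal{V})$, after which the claimed identity is pure algebra.

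First I would fix $x\in\mathbb{R}^{n}$ and invoke Theorem \ref{Thm:LyapunovEquation} to write $\mathcal{V}(x)=\gamma(x)\alpha(x)+\mathcal{V}(f(x))$. Substituting this into \eqref{eq:Lyapunov2} and using the multiplicativity $\exp(-a-b)=\exp(-a)\exp(-b)$ gives
\[
\mathcal{W}(x)=1-\exp(-\gamma(x)\alpha(x))\,\exp(-\mathcal{V}(f(x))).
\]
Next I would identify the two factors through the relevant definitions: $\exp(-\gamma(x)\alpha(x))=1-\xi(x)$ by \eqref{eq:xi}, and $\exp(-\mathcal{V}(f(x)))=1-\mathcal{W}(f(x))$ by \eqref{eq:Lyapunov2}. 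This yields $\mathcal{W}(x)=1-(1-\xi(x))(1-\mathcal{W}(f(x)))$. Expanding the product and subtracting $\mathcal{W}(f(x))$ from both sides, a one-line rearrangement collapses to $\mathcal{W}(x)-\mathcal{W}(f(x))=\xi(x)(1-\mathcal{W}(f(x)))$, which is exactly \eqref{eq:ZubovEqn1}.

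The only point requiring genuine care is the regime $\mathcal{V}(x)=\infty$, where the multiplicative factorization must be read through the convention $\exp(-\infty)\defas 0$; here the clean expansion above is not literally available. I would therefore split into two subcases. If $x\notin\mathcal{X}$, then $\gamma(x)=\infty$ (and $\alpha(x)>0$ since $x\neq 0_{n}$), so $\xi(x)=1$ and $\mathcal{W}(x)=1$, whence both sides of \eqref{eq:ZubovEqn1} reduce to $1-\mathcal{W}(f(x))$. If instead $x\in\mathcal{X}$ but $\mathcal{V}(x)=\infty$, then $\gamma(x)\alpha(x)<\infty$ forces $\mathcal{V}(f(x))=\infty$ via the Lyapunov equation, so $\mathcal{W}(f(x))=1$ and both sides vanish. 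Thus no real obstacle arises: the verification in the infinite case is the only nonroutine step, and it amounts to checking these two degenerate evaluations, after which the proof is complete.
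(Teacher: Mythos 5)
Your proof is correct and follows essentially the same route as the paper: both invoke Theorem \ref{Thm:LyapunovEquation} to decompose $\mathcal{V}(x)=\gamma(x)\alpha(x)+\mathcal{V}(f(x))$, pass through the exponential to get $1-\mathcal{W}(x)=\exp(-\gamma(x)\alpha(x))(1-\mathcal{W}(f(x)))$, and rearrange algebraically. Your explicit case analysis for $\mathcal{V}(x)=\infty$ is a welcome refinement of a point the paper handles only implicitly via the convention $\exp(-\infty)\defas 0$, but it does not change the argument.
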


\begin{proof}
Using Theorem \ref{Thm:LyapunovEquation}, we have, for any $x\in \mathbb{R}^{n}$, $\mathcal{W}(x)= 1-\exp(-\mathcal{V}(x))= 1-\exp(-\mathcal{V}({f}(x))-\gamma(x)\alpha(x))
        =1-\exp(-\gamma(x)\alpha(x))(1-\mathcal{W}({f}(x)))$,
implying 
$
     \mathcal{W}(x)-\mathcal{W}(f(x))=1-\mathcal{W}({f}(x))
     -\exp(-\gamma(x)\alpha(x))(1-\mathcal{W}({f}(x)))
     =\xi(x)(1-\mathcal{W}({f}(x)))$.
\end{proof}

\begin{theorem}\label{thm:ZubovEqn2}
If $w\colon D\subseteq\mathcal{X}\rightarrow \mathbb{R}$ satisfies equation \eqref{eq:ZubovEqn1} over $D$, then for  all $x\in D$, 
$w$ satisfies  the Zubov equation
\begin{equation}\label{eq:ZubovEqn2}
w(x)-w({f}(x))=\beta(x)(1-w(x)),
\end{equation}
where 
\begin{equation}\label{eq:beta}
\beta(x)\defas \exp(\gamma(x)\alpha(x))-1.
\end{equation}
\end{theorem}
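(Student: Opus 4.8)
The plan is to treat the hypothesis purely algebraically: for each fixed $x\in D$, equation \eqref{eq:ZubovEqn1} is a linear relation among the two quantities $w(x)$ and $w(f(x))$, and I will rearrange it into the form \eqref{eq:ZubovEqn2}. Unlike the derivation of \eqref{eq:ZubovEqn1} from the series definition of $\mathcal{W}$, no summation or limiting argument is needed here; the claim is a genuinely pointwise equivalence between the two functional equations. The essential observation is that the multipliers $\xi$ and $\beta$ are tied together through the exponential of $\gamma(x)\alpha(x)$: from \eqref{eq:xi} and \eqref{eq:beta} one reads off $1-\xi(x)=\exp(-\gamma(x)\alpha(x))$ and $1+\beta(x)=\exp(\gamma(x)\alpha(x))$, so that $(1+\beta(x))(1-\xi(x))=1$ and, equivalently, $\beta(x)=\exp(\gamma(x)\alpha(x))\,\xi(x)$. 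These identities are precisely what convert a relation written in terms of $1-w(f(x))$ into one written in terms of $1-w(x)$.

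First I would fix $x\in D$ and rewrite \eqref{eq:ZubovEqn1} so as to isolate $1-w(x)$. Subtracting the two sides of \eqref{eq:ZubovEqn1} from $1-w(f(x))$ gives, on the left, $\bigl(1-w(f(x))\bigr)-\bigl(w(x)-w(f(x))\bigr)=1-w(x)$, and on the right, $\bigl(1-w(f(x))\bigr)-\xi(x)\bigl(1-w(f(x))\bigr)=(1-\xi(x))\bigl(1-w(f(x))\bigr)$. Using $1-\xi(x)=\exp(-\gamma(x)\alpha(x))$ this reads $1-w(x)=\exp(-\gamma(x)\alpha(x))\bigl(1-w(f(x))\bigr)$, which inverts to $1-w(f(x))=\exp(\gamma(x)\alpha(x))\bigl(1-w(x)\bigr)$.

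Second I would substitute this expression for $1-w(f(x))$ back into the right-hand side of \eqref{eq:ZubovEqn1}. This produces $w(x)-w(f(x))=\xi(x)\exp(\gamma(x)\alpha(x))\bigl(1-w(x)\bigr)$, and the coefficient $\xi(x)\exp(\gamma(x)\alpha(x))$ collapses to $\beta(x)$ by the identity noted above, giving exactly \eqref{eq:ZubovEqn2}.

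I do not expect a genuine obstacle here: the argument is a short manipulation once the reciprocal relation between $1-\xi$ and $1+\beta$ is recognized, and it holds for every $x\in D$ irrespective of whether $w(x)$ or $w(f(x))$ happens to equal $1$. The only point worth verifying is that the inversion step in the first part is legitimate, i.e. that $1-\xi(x)=\exp(-\gamma(x)\alpha(x))\neq 0$; this is immediate since $\gamma$ is finite on $\mathcal{X}\supseteq D$ and $\alpha$ is real-valued, so the exponential factor is strictly positive.
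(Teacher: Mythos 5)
Your proof is correct and follows essentially the same route as the paper's: both derive the key identity $1-w(f(x))=\exp(\gamma(x)\alpha(x))(1-w(x))$ from equation \eqref{eq:ZubovEqn1} (using the finiteness of $\gamma$ on $D\subseteq\mathcal{X}$, which the paper also notes) and then rearrange to obtain \eqref{eq:ZubovEqn2}. The only cosmetic difference is that you substitute back into the right-hand side of \eqref{eq:ZubovEqn1} and use $\xi(x)\exp(\gamma(x)\alpha(x))=\beta(x)$, while the paper regroups $w(x)-w(f(x))$ directly; these are the same computation.
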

\begin{proof}
    When $x\in D \subseteq \mathcal{X}$,  $\gamma(x)<\infty$, therefore, using equation \eqref{eq:ZubovEqn1}, 
    $
1-{w}({f}(x))=\exp(\gamma(x)\alpha(x))(1-w(x)).
    $
    Hence,
    $
       w(x)- w({f}(x))=\exp(\gamma(x)\alpha(x))(1-w(x))-1+w(x)= \exp(\gamma(x)\alpha(x))(1-w(x))-(1-w(x))= \beta(x)(1-w(x))$.  
\end{proof}
\subsection{Lyapunov and Zubov equations: Uniqueness results}
We have shown that the value functions $\mathcal{V}$ and $\mathcal{W}$ are solutions to the equations \eqref{eq:LyapunovEqn} and \eqref{eq:ZubovEqn1}, respectively. In the next section, we show that the solutions to these equations are unique with respect to functions the are continuous at the origin. We start with the following technical result:
\begin{lemma}\label{lem:boundedness}
    Assume that $w\colon \mathcal{D}_{0}^{\mathcal{X}}\rightarrow \mathbb{R}$ is continuous at the origin, with $w(0_{n})=0$ and satisfying equation \eqref{eq:ZubovEqn1} over $\mathcal{D}_{0}^{\mathcal{X}}$. Then $w(x)<1$ for all $x\in \mathcal{D}_{0}^{\mathcal{X}}$.
\end{lemma}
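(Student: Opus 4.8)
The plan is to convert the Zubov equation \eqref{eq:ZubovEqn1} into a multiplicative recursion, iterate it along the trajectory emanating from $x$, and then pass to the limit using the known convergence behavior of points in $\mathcal{D}_{0}^{\mathcal{X}}$. The key algebraic observation is that, since $1-\xi(x)=\exp(-\gamma(x)\alpha(x))$ by \eqref{eq:xi}, equation \eqref{eq:ZubovEqn1} is equivalent to
$$
1-w(x)=\exp(-\gamma(x)\alpha(x))\bigl(1-w(f(x))\bigr).
$$
First I would derive this identity by subtracting both sides of \eqref{eq:ZubovEqn1} from $1-w(f(x))$ and factoring out the common term $1-w(f(x))$ on the right.

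Next I would exploit the invariance of $\mathcal{D}_{0}^{\mathcal{X}}$ under $f$, established in the first theorem of Section \ref{sec:DOAProperties}: for $x\in\mathcal{D}_{0}^{\mathcal{X}}$ every iterate $\varphi_{x}(k)=f^{k}(x)$ again lies in $\mathcal{D}_{0}^{\mathcal{X}}$, so the multiplicative identity is valid at each $\varphi_{x}(k)$. Applying it successively and telescoping the exponential factors yields, for every $N\in\mathbb{Z}_{+}$,
$$
1-w(x)=\exp\!\left(-\sum_{k=0}^{N-1}\gamma(\varphi_{x}(k))\alpha(\varphi_{x}(k))\right)\bigl(1-w(\varphi_{x}(N))\bigr).
$$

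Finally I would let $N\to\infty$. Since $x\in\mathcal{D}_{0}^{\mathcal{X}}$, we have $\varphi_{x}(N)\to 0_{n}$, and by the continuity of $w$ at the origin together with $w(0_{n})=0$, the factor $1-w(\varphi_{x}(N))\to 1$. The exponent is exactly the $N$-th partial sum of the series defining $\mathcal{V}(x)$ in \eqref{eq:Lyapunov1}; by the characterization $\mathcal{D}_{0}^{\mathcal{X}}=\mathbb{V}_{\infty}$ proved in Section \ref{sec:DOACharcterization}, this sum converges to $\mathcal{V}(x)<\infty$, so the exponential factor tends to $\exp(-\mathcal{V}(x))$. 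Taking the limit therefore gives $1-w(x)=\exp(-\mathcal{V}(x))>0$, which yields $w(x)<1$ for every $x\in\mathcal{D}_{0}^{\mathcal{X}}$ (and, as a byproduct, recovers $w=\mathcal{W}$, foreshadowing the uniqueness result).

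I expect the only genuine subtlety to lie in the passage to the limit: the convergence $1-w(\varphi_{x}(N))\to 1$ is precisely where the hypothesis of continuity at the origin is indispensable, and the finiteness of the exponent is precisely where the DOA characterization must be invoked. The telescoping step itself is routine once invariance guarantees that \eqref{eq:ZubovEqn1} holds at every point of the orbit, so the proof reduces to assembling these two ingredients correctly.
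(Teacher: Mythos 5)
Your proof is correct, but it takes a genuinely different route from the paper's. The paper argues by contradiction using the second Zubov form \eqref{eq:ZubovEqn2} (via Theorem \ref{thm:ZubovEqn2}): if $w(x)\geq 1$ for some $x\in\mathcal{D}_{0}^{\mathcal{X}}$, then since $\beta\geq 0$ the identity $w(x)-w(\varphi_{x}(1))=\beta(x)(1-w(x))\leq 0$ forces $w(\varphi_{x}(k))$ to be nondecreasing and bounded below by $1$ along the trajectory, contradicting $w(\varphi_{x}(k))\rightarrow w(0_{n})=0$; that argument needs only nonnegativity of $\beta$, invariance of $\mathcal{D}_{0}^{\mathcal{X}}$, and continuity of $w$ at the origin, and never invokes the characterization $\mathcal{D}_{0}^{\mathcal{X}}=\mathbb{V}_{\infty}$. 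Your telescoped multiplicative identity $1-w(x)=\exp\bigl(-\sum_{k=0}^{N-1}\gamma(\varphi_{x}(k))\alpha(\varphi_{x}(k))\bigr)\bigl(1-w(\varphi_{x}(N))\bigr)$ is valid — invariance justifies applying \eqref{eq:ZubovEqn1} along the orbit, and $\gamma<\infty$ on $\mathcal{X}$ makes $1-\xi=\exp(-\gamma\alpha)$ legitimate — and your passage to the limit is sound for exactly the reasons you flag: continuity at the origin handles the trailing factor, while the characterization theorem supplies $\mathcal{V}(x)<\infty$ so that the exponential factor is strictly positive (without that finiteness you would only conclude $w(x)\leq 1$, not the strict inequality). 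What your approach buys is strictly more: the closed form $1-w(x)=\exp(-\mathcal{V}(x))$ shows $w=\mathcal{W}$ on $\mathcal{D}_{0}^{\mathcal{X}}$, a fact the paper obtains only later in the uniqueness theorem by setting $\mathbf{v}=-\ln(1-\mathbf{w})$ — a step whose well-definedness is precisely what this lemma exists to guarantee — and invoking Theorem \ref{thm:LyapunovEqnUniqueness}; your multiplicative telescoping is in effect the exponential of the additive telescoping used in that theorem's proof. The trade-off is that the paper's contradiction argument is lighter on prerequisites and self-contained, whereas yours front-loads the DOA characterization but collapses the lemma and much of the uniqueness argument on $\mathcal{D}_{0}^{\mathcal{X}}$ into a single computation.
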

\begin{proof}
  Using Theorem \ref{thm:ZubovEqn2}, $w$ satisfies equation \eqref{eq:ZubovEqn2} over $\mathcal{D}_{0}^{\mathcal{X}}$. Assume that for some  $x\in \mathcal{D}_{0}^{\mathcal{X}}$, $w(x)\geq 1$. We have $\varphi_{x}(k)\in \mathcal{D}_{0}^{\mathcal{X}}~\forall k\in \mathbb{Z}_{+}$ and $\lim_{k\rightarrow \infty}\varphi_{x}(k)=0_{n}$. Using equation \eqref{eq:ZubovEqn2}, we have 
  $
w(x)-w(\varphi_{x}(1))=\beta(x)(1-w(x))\leq 0
  $
Hence, $w(\varphi_{x}(1))\geq w(x)\geq 1$, and by induction, we have $w(\varphi_{x}(k+1))\geq w(\varphi_{x}(k))\geq 1$ for all $k\in \mathbb{Z}_{+}$. Hence, $\lim_{k\rightarrow \infty}w(\varphi_{x}(k))\neq 0$, which yields a contradiction as $\lim_{k\rightarrow \infty}w(\varphi_{x}(k))$ must be zero due to the {\color{red}continuity} of $w$ at the origin, with ${\color{red}w(0_{n})=0}$ and $\lim_{k\rightarrow \infty}\varphi_{x}(k)=0_{n}$, and that completes the proof.
\end{proof}

\begin{theorem}\label{thm:LyapunovEqnUniqueness}
    Let $\mathbf{v}\colon \mathcal{D}_{0}^{\mathcal{X}}\rightarrow \mathbb{R}$ be a  function continuous at the origin and satisfying $\mathbf{v}(0_{n})=0$. Assume that $\mathbf{v}$ satisfies equation \eqref{eq:LyapunovEqn} over   
    $\mathcal{D}_{0}^{\mathcal{X}}$. Then
    $
\mathbf{v}(x)=\mathcal{V}(x)~\forall x\in \mathcal{D}_{0}^{\mathcal{X}}.
    $
\end{theorem}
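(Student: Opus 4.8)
The plan is to exploit the recursive structure of the maximal Lyapunov equation \eqref{eq:LyapunovEqn} and telescope it along trajectories. First I would observe that, by the invariance of $\mathcal{D}_{0}^{\mathcal{X}}$ under $f$ (established in the first theorem of Section~\ref{sec:DOAProperties}), the entire forward orbit $\{\varphi_{x}(k)\}_{k\in\mathbb{Z}_{+}}$ of any $x\in\mathcal{D}_{0}^{\mathcal{X}}$ remains inside $\mathcal{D}_{0}^{\mathcal{X}}$, so the relation $\mathbf{v}(y)=\gamma(y)\alpha(y)+\mathbf{v}(f(y))$ is valid at every point $y=\varphi_{x}(k)$ of the orbit. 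Applying it repeatedly (a straightforward induction on $N$) yields the partial-sum identity
$$
\mathbf{v}(x)=\sum_{k=0}^{N-1}\gamma(\varphi_{x}(k))\alpha(\varphi_{x}(k))+\mathbf{v}(\varphi_{x}(N)),\quad N\in\mathbb{N}.
$$

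Next I would pass to the limit $N\to\infty$. Because $x\in\mathcal{D}_{0}^{\mathcal{X}}=\mathbb{V}_{\infty}$ by the characterization theorem of Section~\ref{sec:DOACharcterization}, the value $\mathcal{V}(x)$ is finite, so the partial sums $\sum_{k=0}^{N-1}\gamma(\varphi_{x}(k))\alpha(\varphi_{x}(k))$ converge to $\mathcal{V}(x)$ as $N\to\infty$. For the remainder term, membership $x\in\mathcal{D}_{0}^{\mathcal{X}}$ gives $\varphi_{x}(N)\to 0_{n}$, and the hypotheses that $\mathbf{v}$ is continuous at the origin with $\mathbf{v}(0_{n})=0$ force $\mathbf{v}(\varphi_{x}(N))\to 0$. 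Combining these two limits in the displayed identity yields $\mathbf{v}(x)=\mathcal{V}(x)$; since $x\in\mathcal{D}_{0}^{\mathcal{X}}$ was arbitrary, this is the claim.

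I expect the argument to be largely routine, with the only genuine content lying in the two limiting statements. The point to be careful about is that the telescoping is legitimate \emph{before} one knows the series converges: the identity above holds for each finite $N$ purely algebraically from \eqref{eq:LyapunovEqn}, and only at the last step does one invoke $\mathcal{V}(x)<\infty$ (equivalently $x\in\mathbb{V}_{\infty}$) to guarantee that both the sum and the remainder possess finite limits. Thus the characterization $\mathcal{D}_{0}^{\mathcal{X}}=\mathbb{V}_{\infty}$ is what makes the remainder term meaningful and is the crux on which the proof rests, while the continuity-at-the-origin hypothesis is exactly what is needed to annihilate that remainder.
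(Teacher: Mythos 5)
Your proof is correct and follows essentially the same route as the paper's: telescoping equation \eqref{eq:LyapunovEqn} along the orbit (valid by invariance of $\mathcal{D}_{0}^{\mathcal{X}}$), then sending $N\to\infty$ using $\mathcal{V}(x)<\infty$ for the partial sums and continuity of $\mathbf{v}$ at the origin with $\mathbf{v}(0_{n})=0$ to kill the remainder $\mathbf{v}(\varphi_{x}(N))$. Your closing remark is if anything slightly more careful than the paper's write-up, noting that the finite-$N$ identity is purely algebraic and convergence is only needed at the limit (indeed, once the remainder vanishes, convergence of the partial sums to $\mathbf{v}(x)$ is automatic).
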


\begin{proof}
    Let $x\in \mathcal{D}_{0}^{\mathcal{X}}$. As $\mathcal{D}_{0}^{\mathcal{X}}$ is invariant, we have 
    $\varphi_{x}(k)\in \mathcal{D}_{0}^{\mathcal{X}}~\forall k\in \mathbb{Z}_{+}.
    $
 Let $k\in \mathbb{N}$. We consequently have 
$ \mathbf{v}(x)-\mathbf{v}(\varphi_{x}(k))=\sum_{j=0}^{k-1}(\mathbf{v}(\varphi_{x}(j))-\mathbf{v}(\varphi_{x}(j+1)))
 =\sum_{j=0}^{k-1}\gamma(\varphi_{x}(j))\alpha(\varphi_{x}(j))$. As $\mathbf{v}$ is continuous at the origin with $\mathbf{v}(0_{n})=0$, $\varphi_{x}(k)\rightarrow 0_{n}$ as $k\rightarrow \infty$, and $\mathcal{V}(x)<\infty$ ($\sum_{j=0}^{k-1}\gamma(\varphi_{x}(j))\alpha(\varphi_{x}(j))$ converges), then taking the limit as $k\rightarrow \infty$ in the both sides of the above equation results in
    $
     \mathbf{v}(x)=\sum_{j=0}^{\infty}\gamma(\varphi_{x}(j))\alpha(\varphi_{x}(j))=\mathcal{V}(x).
    $
\end{proof}
\begin{theorem}
    Let $\mathbf{w}\colon \mathbb{R}^{n}\rightarrow \mathbb{R}$ be a bounded function, continuous at the origin, with $\mathbf{w}(0_{n})=0$, and  satisfying 
    equation \eqref{eq:ZubovEqn1} over $\mathbb{R}^{n}$.  
Then $\mathbf{w}=\mathcal{W}$.
\end{theorem}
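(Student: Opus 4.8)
The plan is to first recast the Zubov equation \eqref{eq:ZubovEqn1} in its equivalent multiplicative form
\[
1-\mathbf{w}(x)=\exp(-\gamma(x)\alpha(x))\,(1-\mathbf{w}(f(x))),
\]
valid for every $x\in\mathbb{R}^{n}$; this is precisely the rearrangement of \eqref{eq:ZubovEqn1} and \eqref{eq:xi} carried out (in the forward direction) in the proof of Theorem \ref{thm:ZubovEqn1}. I would record the pointwise consequence: at a point $x\notin\mathcal{X}$ we have $\gamma(x)=\infty$ and $\alpha(x)>0$ (since $0_{n}\in\mathcal{X}$ forces $x\neq 0_{n}$, and $\alpha(x)\geq\alpha_{m}\norm{x}^{2}>0$), so the factor $\exp(-\gamma(x)\alpha(x))$ vanishes and the relation forces $\mathbf{w}(x)=1$, using only that $\mathbf{w}(f(x))$ is finite. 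I then split the verification of $\mathbf{w}=\mathcal{W}$ according to whether $x\in\mathcal{D}_{0}^{\mathcal{X}}$ or not, recalling that $\mathcal{W}(x)<1$ exactly on $\mathcal{D}_{0}^{\mathcal{X}}$ and $\mathcal{W}(x)=1$ on its complement (by the characterization $\mathcal{D}_{0}^{\mathcal{X}}=\mathbb{W}_{1}$).

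On $\mathcal{D}_{0}^{\mathcal{X}}$, the restriction of $\mathbf{w}$ meets the hypotheses of Lemma \ref{lem:boundedness}, so $\mathbf{w}(x)<1$ there; since $\mathcal{D}_{0}^{\mathcal{X}}$ is invariant, both $1-\mathbf{w}(x)$ and $1-\mathbf{w}(f(x))$ are strictly positive, and I may take logarithms in the multiplicative form. Setting $\mathbf{v}(x)\defas-\log(1-\mathbf{w}(x))$ converts that relation into the maximal Lyapunov equation $\mathbf{v}(x)=\gamma(x)\alpha(x)+\mathbf{v}(f(x))$, i.e.\ \eqref{eq:LyapunovEqn}. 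Moreover $\mathbf{v}$ is continuous at the origin with $\mathbf{v}(0_{n})=0$, because $\mathbf{w}$ is and $1-\mathbf{w}(0_{n})=1$. Theorem \ref{thm:LyapunovEqnUniqueness} then gives $\mathbf{v}=\mathcal{V}$ on $\mathcal{D}_{0}^{\mathcal{X}}$, whence $\mathbf{w}(x)=1-\exp(-\mathcal{V}(x))=\mathcal{W}(x)$ by \eqref{eq:Lyapunov2}.

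For $x\notin\mathcal{D}_{0}^{\mathcal{X}}$ I must show $\mathbf{w}(x)=1$. Iterating the multiplicative relation along the trajectory yields, for every $N$,
\[
1-\mathbf{w}(x)=\exp\!\Big(-\sum_{k=0}^{N-1}\gamma(\varphi_{x}(k))\alpha(\varphi_{x}(k))\Big)\,(1-\mathbf{w}(\varphi_{x}(N))).
\]
If the trajectory ever leaves $\mathcal{X}$, with $\varphi_{x}(N_{0})\notin\mathcal{X}$ the first such instance, then $\mathbf{w}(\varphi_{x}(N_{0}))=1$ by the pointwise consequence above, the finite product multiplying it is positive, and so $\mathbf{w}(x)=1$. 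If instead $\varphi_{x}(k)\in\mathcal{X}$ for all $k$, then $x\notin\mathcal{D}_{0}^{\mathcal{X}}$ forces $\mathcal{V}(x)=\sum_{k}\gamma(\varphi_{x}(k))\alpha(\varphi_{x}(k))=\infty$, so the exponential prefactor tends to $0$ as $N\to\infty$; bounding $\abs{1-\mathbf{w}(\varphi_{x}(N))}\leq 1+\sup\abs{\mathbf{w}}$ via the boundedness hypothesis forces the right-hand side to $0$, hence $\mathbf{w}(x)=1=\mathcal{W}(x)$.

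The main obstacle is this last case — a trajectory that stays in $\mathcal{X}$ forever without converging to $0_{n}$ — since it is the only place where boundedness of $\mathbf{w}$ is genuinely used: it is exactly what prevents the (possibly diverging) factor $1-\mathbf{w}(\varphi_{x}(N))$ from compensating the vanishing exponential. Everything else reduces cleanly either to the already-established Lemma \ref{lem:boundedness} and Theorem \ref{thm:LyapunovEqnUniqueness}, or to the elementary pointwise identity $\mathbf{w}=1$ outside $\mathcal{X}$.
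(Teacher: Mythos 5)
Your proof is correct. On the core cases it coincides with the paper's argument: outside $\mathcal{X}$ both you and the paper force $\mathbf{w}(x)=1$ from $\xi(x)=1$ (you are in fact slightly more careful, noting explicitly that $\alpha(x)>0$ because $0_{n}\in\mathcal{X}$ implies $x\neq 0_{n}$, which the paper leaves implicit), and on $\mathcal{D}_{0}^{\mathcal{X}}$ both apply Lemma \ref{lem:boundedness} to justify the logarithmic transform $\mathbf{v}=-\ln(1-\mathbf{w})$ and conclude via Theorem \ref{thm:LyapunovEqnUniqueness}. Where you genuinely diverge is on the residual set of points whose trajectories do not lie in the DOA. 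The paper works with the \emph{difference} of two solutions: for trajectories that exit $\mathcal{X}$ it propagates the relation $\mathbf{w}_{1}(x)-\mathbf{w}_{2}(x)=(1-\xi(x))(\mathbf{w}_{1}(f(x))-\mathbf{w}_{2}(f(x)))$ backward from the exit point where the difference vanishes, and for trajectories that remain in $\mathcal{X}$ without converging it runs a contradiction: using openness of $\mathcal{D}_{0}^{\mathcal{X}}$ to obtain a separation $\norm{\varphi_{x}(k)}\geq\theta$, the forward relation through $1+\beta$ makes any nonzero difference grow at least like $\exp(j\underline{\gamma}\alpha_{m}\theta^{2})$, contradicting boundedness of $\mathbf{w}$. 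You instead telescope the multiplicative form once, $1-\mathbf{w}(x)=\exp(-\sum_{k=0}^{N-1}\gamma(\varphi_{x}(k))\alpha(\varphi_{x}(k)))(1-\mathbf{w}(\varphi_{x}(N)))$, and handle both sub-cases uniformly: plug in $\mathbf{w}(\varphi_{x}(N_{0}))=1$ at the first exit time, or, when the trajectory stays in $\mathcal{X}$, use $\mathcal{V}(x)=\infty$ (via the characterization $\mathcal{D}_{0}^{\mathcal{X}}=\mathbb{V}_{\infty}$) to send the prefactor to zero while boundedness caps the trailing factor. Your route is a direct computation that dispenses with the contradiction argument and the $\theta$-separation step entirely; its only extra dependency is the characterization theorem, invoked to get divergence of the partial sums, whereas the paper derives the needed divergence from the cruder pointwise bound $\gamma\alpha\geq\underline{\gamma}\alpha_{m}\theta^{2}$ along the trajectory. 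You also correctly isolate the single place where boundedness of $\mathbf{w}$ is essential — preventing $1-\mathbf{w}(\varphi_{x}(N))$ from compensating the vanishing exponential — which is precisely the role it plays (in mirror image) in the paper's unboundedness contradiction.
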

\begin{proof}
   First note that the difference of two solutions, $\mathbf{w}_{1}$ and $\mathbf{w}_{2}$, to equation \eqref{eq:ZubovEqn1} satisfies, for $x\in \mathbb{R}^{n}$,
\begin{equation}\label{eq:difference1}
    \mathbf{w}_{1}(x)-\mathbf{w}_{2}(x)=(1-\xi(x))(\mathbf{w}_{1}(f(x))-\mathbf{w}_{2}(f(x))).
\end{equation}
When $x\in \mathbb{R}^{n}\setminus \mathcal{X}$, we have $\xi(x)=1$, implying $\mathbf{w}(x)=1=\mathcal{W}(x)$. Over  $ \mathcal{D}_{0}^{\mathcal{X}}$, $\mathbf{v}(\cdot)=-\ln(1-\mathbf{w}(\cdot))$ is well-defined due to Lemma \ref{lem:boundedness}, satisfying equation \eqref{eq:LyapunovEqn}. Then, using Theorem \ref{thm:LyapunovEqnUniqueness},  it follows that $\mathbf{v}(x)=\mathcal{V}(x)$, hence $\mathbf{w}(x)=\mathcal{W}(x)$ for all   $x\in \mathcal{D}_{0}^{\mathcal{X}}$.

Let $x\in \mathcal{X}$ be such that $\varphi_{x}(j)\in \mathbb{R}^{n}\setminus \mathcal{X}$ or $\varphi_{x}(j)\in \mathcal{D}_{0}^{\mathcal{X}}$  for some $j\in \mathbb{N}$ and $\varphi_{x}(k)\in \mathcal{X}~\forall k\in \intcc{0;j-1}$.  Then, we have, using equation \eqref{eq:difference1},  
$
\mathbf{w}(\varphi_{x}(j-1))-\mathcal{W}(\varphi_{x}(j-1))=(1-\xi(\varphi_{x}(j-1)))(\mathbf{w}(\varphi_{x}(j))-\mathcal{W}(\varphi_{x}(j)))=0$,
and by an inductive argument, we have 
$\mathbf{w}(\varphi_{x}(k))-\mathcal{W}(\varphi_{x}(k))=0~\forall k\in \intcc{0;j}$, implying $\mathbf{w}(x)=\mathcal{W}(x)$. 

 Now, let $x\in \mathcal{X}$ be such that $\varphi_{x}(k)\in \mathcal{X}$ for all $k\in \mathbb{Z}_{+}$ and $\lim_{k\rightarrow \infty}\varphi_{x}(k)\neq 0_{n}$. Obviously, $\varphi_{x}(k)\not\in \mathcal{D}_{0}^{\mathcal{X}}$ for all $k\in \mathbb{Z}_{+}$ 
then it follows, using the openness of $\mathcal{D}_{0}^{\mathcal{X}}$ that there exist $\theta>0$ such that $\norm{\varphi_{x}(k)}\geq \theta~\forall k\in \mathbb{Z}_{+}$. Assume $\mathbf{w}(x)\neq \mathcal{W}(x)$. 

Note the difference of two solutions, $\mathbf{w}_{1}$ and $\mathbf{w}_{2}$, to equation \eqref{eq:ZubovEqn1} over $\mathcal{X}$ (which are also solutions to \eqref{eq:ZubovEqn2} over $\mathcal{X}$ using Theorem \ref{thm:ZubovEqn2}) satisfies, for $x\in \mathcal{X}$,
\begin{equation}\label{eq:difference2}
    \mathbf{w}_{1}(f(x))-\mathbf{w}_{2}(f(x))=(1+\beta(x))(\mathbf{w}_{1}(x)-\mathbf{w}_{2}(x)).
\end{equation} 
It then follows that, for all $j\in \mathbb{N}$,   
$
\mathbf{w}(\varphi_{x}(j))-\mathcal{W}(\varphi_{x}(j))=\prod_{k=0}^{j-1}(1+\beta(\varphi_{x}(k)))(\mathbf{w}(x)-\mathcal{W}(x)).
$
For all $k \in \intcc{0;j-1}$, we have 
$
1+\beta(\varphi_{x}(k))=\exp(\gamma(\varphi_{x}(k))\alpha(\varphi_{x}(k))) \geq \exp(\underline{\gamma}\alpha_{m}\norm{\varphi_{x}(k)})\geq  \exp(\underline{\gamma}\alpha_{m}\theta^{2})$. Hence,
  $
\abs{\mathbf{w}(\varphi_{x}(j))-\mathcal{W}(\varphi_{x}(j))}\geq \exp(j\underline{\gamma}\alpha_{m}\theta^{2})\abs{\mathbf{w}(x)-\mathcal{W}(x)}.
  $
  As $\lim_{ j\rightarrow\infty}\exp(j\underline{\gamma}\alpha_{m}\theta^{2})=\infty$. It then follows that for each $M>0$, there exists $y\in \mathcal{X}$ (which corresponds to a point in the trajectory $\varphi_{x}$) such that  $\mathbf{w}(y)>M+\mathcal{W}(y)\geq M-1$ or $\mathbf{w}(y)<-M+\mathcal{W}(y)\leq -M+1=-(M-1)$, which is equivalent to $\abs{w(y)}> M-1$. Hence, $\mathbf{w}$ is unbounded over $\mathcal{X}$, a contradiction.
\end{proof}
\section{Physics-Informed neural solution}
\label{sec:PINN}

 Let $W_N(\cdot ; \theta)\colon \mathbb{R}^{n}\rightarrow \mathbb{R}$ be a fully-connected feedforward neural network with parameters denoted by $\theta$. We now train $W_N$ to solve Zubov's equation \eqref{eq:ZubovEqn1} on a compact set $\mathbb{X} \subset \mathbb{R}^n$, with $\mathcal{X} \cap \mathbb{X}\neq \emptyset$, by minimizing the following loss function:
\begin{equation}
\begin{split}
 \label{eq:lossV}
     \text{Loss}(\theta)=& 
     \frac{1}{N_c}\sum_{i=1}^{N_c} (W_N(x_{i};\theta)-W_N({f}(x_{i});\theta)-\\
    &\xi(x_{i})(1-W_N(f(x_{i});\theta)))^2  \\
    % \\& + \lambda_b \frac{1}{N_b} \sum_{i=1}^{N_b}(W_N(y_i;\theta)-B(y_i))^2\\
    & + \lambda_d \frac{1}{N_d} \sum_{i=1}^{N_d}(W_N(z_i;\theta)-\hat W(z_i))^2, 
    \end{split}
\end{equation}
 where $\lambda_b > 0$ and $\lambda_d > 0$ are user-defined weighting parameters. Here, the points $\{x_i\}_{i=1}^{N_c} \subset \mathbb{X}$ are the (interior) collocation points to compute the residual error of \eqref{eq:ZubovEqn1}. 
 % $B(y_i)$ denotes the boundary conditions of the Zubov equation, with the set $\{y_i\}_{i=1}^{N_b} \subset \partial \mathbb{X}$ consists of boundary points. Specifically, we only have  here. Moreover, if 
 To ensure an accurate neural network solution, we add the second term (refer to the data term) to guide the training, using the (approximate) ground truth values for $w$ at a set of points $\{z_i\}_{i=1}^{N_d} \subset \mathbb{X}$.
 
For each $z_i$, given a fixed (sufficiently large)  $M\in \mathbb{N}$, we compute $\mathcal{V}(z_i) \approx \sum_{k=0}^{M}\gamma({\varphi}_{x}(k))\alpha ({\varphi}_{x}(k))$. Note that as along as $\mathcal{V}(z_i)$ is finite, it is within the safe region. Next, we choose a large enough constant $C_{max}$ such that $C_{max} > \mathcal{V}(z_i)$, for all sampled $z_i$ whose trajectory does not leave the safe region and converges towards to the origin. Considering the fact that $1 - \exp(-40) \approx 1$, we introduce a scaling factor $\mu = \frac{40}{C_{max}}$ and define $\hat{\mathcal{V}}=\mu \mathcal{V}$, which corresponds to the value function given by (\ref{eq:Lyapunov1}) with $\alpha$ replaced by $\mu\alpha$. We then let $\hat W(z_i) = 1- \exp ( - \hat{\mathcal{V}}(z_i))$. Consequently, the approximate ground truth $\hat W$ takes values from 0 to 1, and $\hat W(z_i) < 1$ when it is in the safe region. Additionally, we set $\hat W(z_i) = 1$ in the following two scenarios: 1) if the trajectory $\varphi_{z_i}(k)$ that enters the unsafe region which results in $\mathcal{V}(z_i) > C_{max}$; 2) $\varphi_{z_i}(k)$ diverges, which can be numerically checked by examining the values of the states, that is, $\norm{\varphi_{z_i}(k)} > C_X$, where $C_X$ is a user-defined threshold.
% i.e., the pairs $(z_i, \hat{W}(z_i))\}_{i=1}^{N_d}$.
We also include the condition $\hat W(0) = 0$ in the data loss.

% In the next subsection, we will discuss in detail how to use PMP, presented in Algorithm \ref{alg:pmp}, to generate data points for solving 

\section{Verified Safe ROAs}
\label{sec:NNVerification}

\subsection{Safe ROAs with quadratic Lyapunov functions}
\label{sec:verify_quadratic}

Herein, we adopt the approach in \cite{serry2024safe} to find ellipsoidal safe ROAs using quadratic Lyapunov functions. We start with local safe ROA by linearization, where we assume that $f$ is twice-continuously differentiable. Let $A=Df(0)$ (assume $A$ to be a Schur matrix) and rewrite $f$ as 
$f(x)=Ax+h(x),~x\in \mathbb{R}^{n},$
where $h(\cdot)=f(\cdot)-A(\cdot)$. Let $Q\in \mathcal{S}^{n}_{++}$ be given, and $P\in \mathcal{S}^{n}_{++}$ be  the solution to the discrete-time algebraic Lyapunov equation
 $
A^{\intercal}PA-P=-Q.
 $
 Define the quadratic Lyapunov function 
 $
{V_P}(x)=x^{\intercal}Px,$ and let   $V_P^+ (x) :=  V_P (f(x)) - V_P(x)$. Define the positive parameter
$d\defas \underline{\lambda}(Q)-\varepsilon>0$,  for some sufficiently small $\varepsilon>0$.  Let $\mathcal{B}\subseteq \mathcal{X}$ be a hyper-rectangle with vector radius $R_{\mathcal{B}}\in \mathbb{R}^{n}_{+}\setminus \{0_{n}\}$, i.e., $\mathcal{B}=\Hintcc{-R_{\mathcal{B}},R_{\mathcal{B}}}$ (such a hyper-rectangle exists due to the openness of $\mathcal{X}$ and the fact that $0\in \mathcal{X}$).  We can find a vector $\eta_{\mathcal{B}}\in \mathbb{R}_{+}^{n}$ (by bounding the Hessian of $f$ over $\mathcal{B}$, e.g, using  interval arithmetic) such that 
$
\abs{h(x)}\leq \frac{\norm{x}^{2}}{2}\eta_{\mathcal{B}},~x\in \mathcal{B}
$. 
Define $
c_{1}=\min\{a_{1},a_{2}\}$, where
$$
a_{1}\defas \frac{(-\beta+\sqrt{\beta^2+4\alpha d})^2}{(2\alpha)^{2}},
$$ {\color{red}$\alpha\defas\norm{P}\norm{\eta_{\mathcal{B}}}^{2}\frac{1}{4\underline{\lambda}(P)}$} and $\beta\defas \norm{|P^{\frac{1}{2}}|\eta_{\mathcal{B}}}\norm{P^{\frac{1}{2}}AP^{-\frac{1}{2}}}$, and  
$$
a_{2}\defas \min_{i\in \intcc{1;n}} \frac{ \mathcal{R}_{\mathcal{B},i}^{2}}{P^{-1}_{i,i}}.
$$
Then, it can be shown that  \cite{serry2025underapproximatingsafe}\footnote{This is an archived version of \cite{serry2024safe}, which corrects errors in the bounds used in the derivations of the ellipsoidal ROAs.} for all $x\in \mathbb{R}^{n}$,
$$
{V}_{P}(x)<c_{1} \Rightarrow x\in \mathcal{B}\subseteq \mathcal{X} \wedge V_{P}^{+}(x)\leq -\varepsilon \norm{x}^{2}.
$$ 
It then follows that:
\begin{prop} \label{prop:local_quad}
    The set $ \mathbf{V}_{c_1} := \{x\in \mathbb{R}^n:\,V_P(x)\le c_1\}$ is a safe  ROA of \eqref{eq:System}. 
\end{prop}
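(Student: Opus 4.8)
The plan is to verify directly the three defining properties of a safe ROA from Section \ref{sec:ProblemSetup}: that $\mathbf{V}_{c_1}$ contains $0_{n}$ in its interior, that it is invariant under $f$, and that it is contained in the safe DOA $\mathcal{D}_{0}^{\mathcal{X}}$. The engine for all three is the implication established immediately above the proposition, namely that $V_P(x)<c_1$ forces both $x\in \mathcal{B}\subseteq \mathcal{X}$ and the strict decrease $V_P^{+}(x)\le -\varepsilon\norm{x}^{2}$.

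The first point I would address is a mismatch between that implication, which is stated for the \emph{open} sublevel set $\{V_P<c_1\}$, and the definition of $\mathbf{V}_{c_1}=\{x:V_P(x)\le c_1\}$, which is \emph{closed}. I would upgrade the implication to the closed set by a continuity/closure argument. Since $V_P(x)=x^{\intercal}Px$ is a positive-definite quadratic form and $c_1>0$ (here $a_1>0$ because $\sqrt{\beta^2+4\alpha d}>\beta$ as $\alpha d>0$, and $a_2>0$ since $P^{-1}$ is positive definite with positive radius components), the closed ellipsoid $\mathbf{V}_{c_1}$ is exactly the closure of the open ellipsoid $\{V_P<c_1\}$. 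The map $V_P^{+}(\cdot)$ is continuous (as $f$ is continuous and $V_P$ is a polynomial) and $\mathcal{B}$ is closed, so both $x\in \mathcal{B}\subseteq\mathcal{X}$ and $V_P^{+}(x)\le -\varepsilon\norm{x}^{2}$ persist upon taking closures. Hence both conclusions of the implication hold for every $x\in \mathbf{V}_{c_1}$, not merely on its interior.

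With the closed version in hand, the remaining steps are short. For the interior property, $c_1>0$ makes the open ellipsoid $\{V_P<c_1\}\subseteq \mathbf{V}_{c_1}$ a neighborhood of $0_{n}$. For invariance, any $x\in \mathbf{V}_{c_1}$ satisfies $V_P(f(x))=V_P(x)+V_P^{+}(x)\le V_P(x)\le c_1$, so $f(x)\in \mathbf{V}_{c_1}$; iterating gives $\varphi_{x}(k)\in \mathbf{V}_{c_1}\subseteq \mathcal{X}$ for all $k\in \mathbb{Z}_{+}$, which is precisely the state-constraint half of membership in $\mathcal{D}_{0}^{\mathcal{X}}$. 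For convergence I would telescope the decrease inequality along the trajectory: summing $V_P^{+}(\varphi_{x}(k))\le -\varepsilon\norm{\varphi_{x}(k)}^{2}$ from $k=0$ to $K-1$ and using $V_P\ge 0$ yields $\varepsilon\sum_{k=0}^{K-1}\norm{\varphi_{x}(k)}^{2}\le V_P(x)\le c_1$ for every $K$, so the series $\sum_{k}\norm{\varphi_{x}(k)}^{2}$ converges and $\varphi_{x}(k)\to 0_{n}$. Thus $x\in \mathcal{D}_{0}^{\mathcal{X}}$, and all three properties are established.

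I expect the only genuinely delicate point to be this passage from the strict to the non-strict sublevel set; once that continuity argument is nailed down, the invariance and convergence arguments are the standard discrete-time Lyapunov and telescoping estimates and carry no surprises.
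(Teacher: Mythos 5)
Your proposal is correct and follows the same route the paper intends: the paper presents Proposition~\ref{prop:local_quad} as an immediate consequence of the cited implication $V_P(x)<c_1 \Rightarrow x\in\mathcal{B}\subseteq\mathcal{X} \wedge V_P^{+}(x)\le-\varepsilon\norm{x}^{2}$ from \cite{serry2025underapproximatingsafe}, and you supply exactly the standard discrete-time Lyapunov details (invariance, telescoping, interior point) that the paper leaves implicit. Your closure argument passing from the strict sublevel set $\{V_P<c_1\}$ to the closed set $\mathbf{V}_{c_1}$ is sound (non-strict inequalities and membership in the closed set $\mathcal{B}$ survive limits, and $c_1>0$ makes the closed ellipsoid the closure of the open one) and correctly handles a detail the paper glosses over.
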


% \begin{proof}
%     If \eqref{eq:c1_P} holds, then $\mathbf{V}_{c_1}$ is a subset of the safe set $\mathcal{X}$. Moreover, by (\ref{eq:v_p_dot}), $V_P$ is strict Lyapunov function for (\ref{eq:System}) and $\mathbf{V}_{c_1}$ is invariant. It follows by standard Lyapunov theory that $\lim_{k\rightarrow \infty} \varphi_{x}(k)=0_{n}, \quad \forall x \in \mathbf{V}_{c_1}$ and  $\mathbf{V}_{c_1}$ is a safe ROA of \eqref{eq:System}.
% \end{proof}

% By taking advantage of the fact that the origin is not contained in the unsafe set, we can find a $c_0 \leq c1$, such that $\{x\in X:\,V_P(x)\le c_0\} \subseteq \{x\in X:\, g(x)\le 1\} $, by continuity.
Suppose that we have verified a safe ROA around the origin, $\mathbf{V}_{c_1}$, for some $c_1>0$.
We then can enlarge the safe ROA with the quadratic Lyapunov function by verifying the following inequality, for $x\in \mathbb{R}^{n}$: 
\begin{align}
 c_1 \le {V}_{P}(x) \le c_2 \Longrightarrow (V_{P}^+(x)  \le -\varepsilon) \wedge (g(x) < 1),\label{eq:dVP}
\end{align}
% \begin{align}
%  (c_1 &\le {V}_{P}(x) \le c_2) \wedge (x\in X) \Longrightarrow \notag \\
% & (V_{P}^+(x)  \le -\varepsilon) \wedge (g(x) \leq 1) \wedge (f(x) \in X),\label{eq:dVP}
% \end{align}
where $c_2>c_1$ is a positive constant.
Unless otherwise specified, $\varepsilon$ is some positive constant in the context.
% Then, we have a larger local region of attraction $\{x\in \mathbb{R}^n:\, V_P(x)\le c_2\}$. With the second condition  we know it a safe local ROA.

\begin{prop} \label{prop:enlarge_quad}
    Suppose that \eqref{eq:dVP} holds. Then, the set $ \mathbf{V}_{c_2} := \{x\in \mathbb{R}^n:\,V_P(x)\le c_2\}$ is a safe ROA of \eqref{eq:System}. 
\end{prop}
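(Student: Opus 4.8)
The plan is to verify the three defining properties of a safe ROA for $\mathbf{V}_{c_2}$: that it contains $0_n$ in its interior, that it is invariant under $f$, and that it is contained in $\mathcal{D}_{0}^{\mathcal{X}}$. The first is immediate: since $V_P$ is continuous and $V_P(0_n)=0<c_2$, the open set $\{x:V_P(x)<c_2\}$ contains $0_n$ and lies inside $\mathbf{V}_{c_2}$, so $0_n\in\mathrm{int}(\mathbf{V}_{c_2})$. Throughout I would use that $\mathbf{V}_{c_1}$ is already a safe ROA (Proposition \ref{prop:local_quad}), hence invariant and contained in $\mathcal{D}_{0}^{\mathcal{X}}$, and that $\mathcal{X}$ is the strict $1$-sublevel set of $g$ (as in Remark \ref{rem:transformation}), so the clause $g(x)<1$ in \eqref{eq:dVP} is exactly membership in $\mathcal{X}$.

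For invariance I would take $x\in\mathbf{V}_{c_2}$ and split on the value of $V_P(x)$. If $V_P(x)\le c_1$, then $x\in\mathbf{V}_{c_1}$, which is invariant, so $f(x)\in\mathbf{V}_{c_1}\subseteq\mathbf{V}_{c_2}$. If $c_1\le V_P(x)\le c_2$, then \eqref{eq:dVP} gives $V_P^+(x)\le-\varepsilon<0$, whence $V_P(f(x))=V_P(x)+V_P^+(x)\le c_2-\varepsilon<c_2$, so again $f(x)\in\mathbf{V}_{c_2}$. Thus $f(\mathbf{V}_{c_2})\subseteq\mathbf{V}_{c_2}$, and by induction every trajectory starting in $\mathbf{V}_{c_2}$ stays there, i.e. $V_P(\varphi_x(k))\le c_2$ for all $k\in\mathbb{Z}_{+}$.

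The substantive step is showing $\mathbf{V}_{c_2}\subseteq\mathcal{D}_{0}^{\mathcal{X}}$, which is where I expect the real work. Fix $x\in\mathbf{V}_{c_2}$ and let $K\defas\inf\{k\in\mathbb{Z}_{+}:V_P(\varphi_x(k))\le c_1\}$. I would first argue $K<\infty$: otherwise $c_1<V_P(\varphi_x(k))\le c_2$ for every $k$ (the upper bound from the invariance just established), so \eqref{eq:dVP} applies at each step and forces $V_P(\varphi_x(k))\le V_P(x)-k\varepsilon\to-\infty$, contradicting $V_P\ge0$. Hence $\varphi_x(K)\in\mathbf{V}_{c_1}$, and since $\mathbf{V}_{c_1}$ is invariant and contained in $\mathcal{D}_{0}^{\mathcal{X}}$, the tail $\{\varphi_x(k)\}_{k\ge K}$ remains in $\mathbf{V}_{c_1}\subseteq\mathcal{X}$ and satisfies $\varphi_x(k)\to0_n$. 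For the initial segment $k<K$ I would use that $c_1<V_P(\varphi_x(k))\le c_2$ together with the $g(x)<1$ clause of \eqref{eq:dVP} to conclude $\varphi_x(k)\in\mathcal{X}$. Combining the two ranges shows the whole trajectory lies in $\mathcal{X}$ and converges to $0_n$, so $x\in\mathcal{D}_{0}^{\mathcal{X}}$.

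The only delicate point is the bookkeeping around the boundary value $V_P=c_1$: I must ensure the band hypothesis \eqref{eq:dVP} (stated on the closed set $c_1\le V_P\le c_2$) and the invariance of $\mathbf{V}_{c_1}$ overlap so that each step of the trajectory is covered by exactly one argument, with no gap and no fragility with respect to strict versus non-strict inequalities. Defining $K$ as the first entry time into the closed sublevel set $\{V_P\le c_1\}$ resolves this cleanly, since for $k<K$ the value is strictly above $c_1$ (so \eqref{eq:dVP} applies, giving both the decrease and $\mathcal{X}$-membership) while for $k\ge K$ the invariance of $\mathbf{V}_{c_1}$ takes over.
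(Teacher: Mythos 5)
Your proof is correct and takes essentially the same route as the paper's: decompose $\mathbf{V}_{c_2}$ into the band $\{x : c_1 \le V_P(x) \le c_2\}$ and the inner set $\mathbf{V}_{c_1}$, use the decrease condition $V_P^+(x) \le -\varepsilon$ to force finite-time entry into $\mathbf{V}_{c_1}$, invoke Proposition~\ref{prop:local_quad} for the tail, and use the $g(x)<1$ clause of \eqref{eq:dVP} for safety of the band. Your write-up merely makes explicit what the paper leaves implicit (the first-entry time $K$, the descent bound $V_P(\varphi_x(k)) \le V_P(x) - k\varepsilon$, invariance of $\mathbf{V}_{c_2}$, and $0_n \in \mathrm{int}(\mathbf{V}_{c_2})$), so there is nothing to correct.
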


\begin{proof}
    Define $\mathbf{V}_{c_2/c_1} := \{x\in \mathbb{R}^n:\, c_1 \le V_P(x)\le c_2\}$. If \eqref{eq:dVP} is satisfied, then all solutions $\varphi_x(k)$ of (\ref{eq:System}) starting in $\mathbf{V}_{c_2/c_1}$ cannot leave $\mathbf{V}_{c_2/c_1}$ until entering  $\mathbf{V}_{c_1}$ in finite time. By  Proposition \ref{prop:local_quad}, if a solution starts in or enters $\mathbf{V}_{c_1}$, it remains in $\mathbf{V}_{c_1}$ 
    and converges to $0_{n}$ eventually. Moreover, we have $g(x) < 1$ on both $\mathbf{V}_{c_2/c_1}$ and $\mathbf{V}_{c_1}$. It follows that $\mathbf{V}_{c_2} = \mathbf{V}_{c_2/c_1} \cup \mathbf{V}_{c_1} \subseteq \mathcal{X}$. Therefore, $\mathbf{V}_{c_2}$ is a safe ROA of \eqref{eq:System}. 
\end{proof}

\begin{remark} \label{rem:c2_max}
Assume $\mathbb{X}=\Hintcc{\underline{\mathbf{x}},\overline{\mathbf{x}}}$, i.e., $\mathbb{X}$ is a hyper-rectangle. 
   An upper bound on the  values of $c$ such that the set $\{x\in \mathbb{R}^{n}:\, x^{\top}Px\leq c\}\subseteq \mathbb{X}$ is needed as the condition \eqref{eq:dVP} is practically verified  over $\mathbb{X}$, and not over $\mathbb{R}^{n}$.  % Let $c\in \mathbb{R}_{+}$, $i\in \intcc{1;n}$, and $\mathbf{e}_{i}\in \mathbb{R}^{n}$ be the \textit{i}th standard basis vector. Then, by applying the transformation $y=P^{1/2}x$, we have   
   % \begin{align*}
%\max_{x^{\top}Px\leq c}x_{i}&=\max_{x^{\top}Px\leq c}x^{\top}\mathbf{e}_{i}=\max_{y^{\top}y\leq c} (P^{-1/2}y)^{\top}\mathbf{e}_{i}\\
%&= \max_{\norm{y}^{2}\leq c} (y)^{\top}P^{-1/2}\mathbf{e}_{i}=\sqrt{c} \norm{P^{-1/2}\mathbf{e}_{i}}\\
%&=\sqrt{c}\sqrt{\mathbf{e}_{i}^{\top}P^{-1}\mathbf{e}_{i}}
%=\sqrt{cP^{-1}_{i,i}}.
%    \end{align*}
%Using the symmetry of the set $\{x\in \mathbb{R}^{n}| x^{\top}Px\leq c\}$, we also have 
%$$
%\min_{x^{\top}Px\leq c}x_{i}=-\sqrt{cP^{-1}_{i,i}}.
%$$
A necessary and sufficient condition for inclusion is then %then
%$$
%(\sqrt{cP^{-1}_{i,i}} \leq \overline{\mathbf{x}}_{i})\wedge (-\sqrt{cP^{-1}_{i,i}} \geq \underline{\mathbf{x}}_{i})~\forall i\in \intcc{1;n}
%$$
%or equivalently,
%$$
%\sqrt{cP^{-1}_{i,i}}\leq \min\{-\underline{\mathbf{x}}_{i},\overline{\mathbf{x}}_{i}\} ~\forall i\in \intcc{1;n}.
%$$
%From the above condition, the upper bound on $c$ that ensures inclusion is given as:
$
c \leq \min_{i\in \intcc{1;n}} (\min\{-\underline{\mathbf{x}}_{i},\overline{\mathbf{x}}_{i}\})^{2}/P^{-1}_{i,i}.
$
\end{remark}

\subsection{Enlarge the ROAs with the neural Lyapunov functions} 
\label{sec:enlarge_neural}
% Assume $\mathbf{V}_{c_2} \subseteq X$.
The safe ROA estimate can be further enlarged if we can find a neural network Lyapunov function $W_N(x)$ by minimizing \eqref{eq:lossV} and verifying the following inequalities:
\begin{align}
(W_N(x)\le w_1) \wedge (x\in \mathbb{X})  & \Longrightarrow V_P(x)\le c_2, \label{eq:c2_inclusion} \\
(V_P(x)\le c_2) \wedge (x\in \mathbb{X})  &  \Longrightarrow W_N(x)\le w_2, \label{eq:w2_inclusion} \\
(w_1\le W_N(x) \le w_2) \wedge (x\in \mathbb{X}) & \Longrightarrow \notag \\ 
(W_N^+(x)  \le -\varepsilon) \wedge (g(x) & < 1) \wedge (f(x) \in \mathbb{X})\label{eq:dW}.
\end{align}
where $\varepsilon>0$, $w_2>w_1>0$, and $W_N^+(x) := W_N (f(x)) - W_N(x)$. 
% If (\ref{eq:dW}) and (\ref{eq:WP}) hold and the set $\mathcal{W}_{c_{V2}}=\{x\in X:\, W_N(x)\le c_{V2}\}$ does not intersect with the boundary of $X$, then $\mathcal{W}_{c_{V2}}$ is a safe ROA for (\ref{eq:System}).

\begin{prop}
    % \textcolor{blue}{Assume $\mathbf{V}_{c_2} \subseteq X$.} 
    Suppose that \eqref{eq:c2_inclusion}-\eqref{eq:dW} and the conditions in Proposition \ref{prop:enlarge_quad} hold. Then the set $\textbf{W}_{w_2} := \{x\in \mathbb{X}:\,W_N(x)\le w_2\}$ is a safe  ROA of \eqref{eq:System}. 
\end{prop}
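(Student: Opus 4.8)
The plan is to verify directly the three defining properties of a safe ROA from Section \ref{sec:ProblemSetup}: that $\mathbf{W}_{w_2}$ contains $0_{n}$ in its interior, is invariant under $f$, and is contained in the safe DOA $\mathcal{D}_{0}^{\mathcal{X}}$. Throughout I would use that $g(x)<1$ is exactly the statement $x\in\mathcal{X}$, and read the inclusion hypotheses as sandwiching the sublevel sets: \eqref{eq:c2_inclusion} says the core $\{x\in\mathbb{X}:W_N(x)\le w_1\}$ lies in $\mathbf{V}_{c_2}$, while \eqref{eq:w2_inclusion} together with the inclusion $\mathbf{V}_{c_2}\subseteq\mathbb{X}$ (guaranteed when $c_2$ respects Remark \ref{rem:c2_max}) gives $\mathbf{V}_{c_2}\subseteq\mathbf{W}_{w_2}$. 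I would also record the mild fact, implicit in the construction of Remark \ref{rem:c2_max}, that $0_{n}\in\mathrm{int}(\mathbb{X})$.

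For the interior property I would start from Proposition \ref{prop:local_quad}, which gives $0_{n}\in\mathrm{int}(\mathbf{V}_{c_1})$. Intersecting a small ball around $0_{n}$ with $\mathrm{int}(\mathbb{X})$ and using $\mathbf{V}_{c_1}\subseteq\mathbf{V}_{c_2}$ (as $c_1\le c_2$) together with \eqref{eq:w2_inclusion} produces an open neighbourhood of $0_{n}$ on which $V_P\le c_2$ and $x\in\mathbb{X}$, hence $W_N\le w_2$; this neighbourhood sits inside $\mathbf{W}_{w_2}$, so $0_{n}\in\mathrm{int}(\mathbf{W}_{w_2})$.

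The heart of the argument is invariance and convergence, which I would handle by splitting $\mathbf{W}_{w_2}$ into the core $\{x\in\mathbb{X}:W_N(x)\le w_1\}$ and the shell $w_1\le W_N\le w_2$. On the core, \eqref{eq:c2_inclusion} places the point in $\mathbf{V}_{c_2}$, which is a safe ROA by Proposition \ref{prop:enlarge_quad} and hence invariant; since $\mathbf{V}_{c_2}\subseteq\mathbb{X}$ and $\mathbf{V}_{c_2}\subseteq\mathbf{W}_{w_2}$, the image stays in $\mathbf{W}_{w_2}$. On the shell, \eqref{eq:dW} yields the strict decrease $W_N(f(x))\le W_N(x)-\varepsilon<w_2$ together with $f(x)\in\mathbb{X}$, so again $f(x)\in\mathbf{W}_{w_2}$; this establishes invariance. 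To obtain $\mathbf{W}_{w_2}\subseteq\mathcal{D}_{0}^{\mathcal{X}}$ I would track a trajectory: by invariance it stays in $\mathbf{W}_{w_2}$, and while it remains in the shell the per-step decrease of at least $\varepsilon$ forces it into the core in at most $\lceil (w_2-w_1)/\varepsilon\rceil$ steps, whence it enters $\mathbf{V}_{c_2}$; from Proposition \ref{prop:enlarge_quad} the tail then stays in $\mathcal{X}$ and converges to $0_{n}$, while the finitely many preceding shell iterates satisfy $g<1$, i.e., lie in $\mathcal{X}$, by \eqref{eq:dW}. Thus every trajectory is safe and convergent, giving $x\in\mathcal{D}_{0}^{\mathcal{X}}$, and combining the three properties finishes the proof.

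The main obstacle, and the reason the hypotheses are shaped as they are, is keeping the entire argument inside $\mathbb{X}$, since the neural-network inequalities are only verified there. The clause $f(x)\in\mathbb{X}$ in \eqref{eq:dW} is exactly what lets the shell decrease be iterated along a trajectory, and the inclusion $\mathbf{V}_{c_2}\subseteq\mathbb{X}$ from Remark \ref{rem:c2_max} is what prevents escape from $\mathbb{X}$ after the trajectory drops into the quadratic core. The only delicate bookkeeping is at the interface $W_N=w_1$, where both regimes apply, and confirming that once a trajectory enters $\mathbf{V}_{c_2}$ its subsequent behaviour is governed entirely by Proposition \ref{prop:enlarge_quad}, regardless of whether $W_N$ stays monotone inside $\mathbf{V}_{c_2}$.
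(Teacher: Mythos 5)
Your proposal is correct and follows essentially the same route as the paper's proof: split $\mathbf{W}_{w_2}$ into the core $\mathbf{W}_{w_1}$ and the shell $\{x\in\mathbb{X}:\,w_1<W_N(x)\le w_2\}$, derive the sandwich $\mathbf{W}_{w_1}\subseteq \mathbf{V}_{c_2}\subseteq \mathbf{W}_{w_2}\subseteq \mathcal{X}$ from \eqref{eq:c2_inclusion}--\eqref{eq:w2_inclusion} and Proposition \ref{prop:enlarge_quad}, prove invariance by cases using the $f(x)\in\mathbb{X}$ clause of \eqref{eq:dW}, and force finite-time entry into the core via the per-step decrease of $\varepsilon$. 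If anything, you are slightly more thorough than the paper, which leaves implicit both the interior condition $0_{n}\in\mathrm{int}(\mathbf{W}_{w_2})$ and the inclusion $\mathbf{V}_{c_2}\subseteq\mathbb{X}$ (needed for $\mathbf{V}_{c_2}\subseteq\mathbf{W}_{w_2}$, which you correctly supply via Remark \ref{rem:c2_max}).
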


\begin{proof}
    Define $\textbf{W}_{w_1} := \{x\in \mathbb{X}:\,W_N(x)\le w_1\}$ and  $\textbf{W}_{w_2/w_1} := \{x\in \mathbb{X}:\, w_1 < W_N(x)\le w_2\}$, where  $\textbf{W}_{w_2}=\textbf{W}_{w_1}\cup \textbf{W}_{w_2/w_1}$. 
    
    By conditions \eqref{eq:c2_inclusion} and \eqref{eq:w2_inclusion}, and proposition \eqref{prop:enlarge_quad},
    $\textbf{W}_{w_1}\subseteq \mathbf{V}_{c_{2}}\subseteq \textbf{W}_{w_2}\subseteq \mathcal{X}$.  Moreover, proposition \eqref{prop:enlarge_quad} implies that 
    \begin{equation}
    \label{eq:StartingFromWw1}
    x\in \mathbf{W}_{w_{1}}\Rightarrow \varphi_{x}(k)\in \mathbf{V}_{c_{2}}\subseteq \textbf{W}_{w_2},~ k\in \mathbb{Z}_{+},
    \end{equation}
    and $\lim_{k\rightarrow \infty}\varphi_{x}(k)=0_{n}$. 
    
    Let $x \in \mathbf{W}_{w_2/w_1}$.  Using condition \eqref{eq:dW}, we have $f(x) \in \mathbb{X}$ and $W_{N}(f(x))\leq w_{2}-\varepsilon\leq w_{2}$, i.e.,  $f(x)\in \textbf{W}_{w_2}$. This and  equation \eqref{eq:StartingFromWw1} imply the invariance of  $\mathbf{W}_{w_{2}}$.  
    
    It remains to show that, for $x \in \textbf{W}_{w_2/w_1}$,  there exists $N\in \mathbb{Z}_{+}$ such that $\varphi_{x}(N)\in \mathbf{W}_{w_{1}}$.    By contradiction, assume that $\varphi_{x}(k) \in \textbf{W}_{w_2/w_1}$ for all $k\in \mathbb{Z}_{+}$, then $w_1 < W_N(\varphi_{x}(k)) \le w_2$ for all $k\in \mathbb{Z}_{+}$. However, by inductively using   condition \eqref{eq:dW},  $W_N(\varphi_{x_0}(k))\leq w_{2} - k \epsilon,~k\in \mathbb{Z}_{+}$, which implies that for an integer $k>(w_{2}-w_{1})/\varepsilon$, $W_N(\varphi_{x_0}(k))< w_{1}$, a contradiction.  
\end{proof}

% \vspace{-1em}
\section{Numerical Examples}
\label{sec:NumericalExamples}
In this section, we present a set of numerical examples to illustrate the effectiveness of the proposed method. The training of the neural network Lyapunov functions is carried out using LyZNet~\cite{liu2024lyznet}, a Python toolbox for learning and verifying Lyapunov functions for nonlinear systems. For this work, we extended LyZNet to handle state-constrained, discrete-time nonlinear systems. For the verification part, the verification for the quadratic Lyapunov function is conducted with dReal~\cite{gao2013dreal} is conducted within the LyZNet framework as well, while the verification of the learned neural network Lyapunov function is with $\alpha,\!\beta$-CROWN \cite{zhang2018efficient, xu2020automatic, xu2021fast, wang2021beta, zhou2024scalable, shi2024genbab}, a GPU-accelerated neural network verifier. The verification with dReal was run on a 2x Intel Xeon 8480+ 2.0 GHz CPU with 24 cores, while GPU-required training and verification with $\alpha,\!\beta$-CROWN were performed on an NVIDIA Hopper H100 GPU. The code for the training and verification with dReal can be found at
\url{https://github.com/RuikunZhou/Safe_ROA_dt_LyZNet}, while the verification with $\alpha,\!\beta$-CROWN is available at \url{https://github.com/RuikunZhou/Safe-ROA-dt-system}.

% \vspace{-1em}
\subsection{Reversed Van der Pol Oscillator}

Consider a discrete-version of the reversed Van der Pol oscillator, which is the Euler discretization of the corresponding continuous-time system:
$$
f(x_{k})=    \begin{pmatrix}
    x_{1,k}-\Delta_{t}  x_{2,k}\\
       x_{2,k}+ \Delta_{t} (x_{1,k} +(x_{1,k}^2 - 1)x_{2,k} ))
    \end{pmatrix},
$$
where the step size $\Delta_{t} = 0.1$. We aim to estimate the safe DOA,  $\mathcal{D}_{0}^{\mathcal{X}}$, where $\mathcal{X}
=\mathbb{R}^2\setminus ([1~1]^{\top}+1/4\mathbb{B}_2)
%=\{x\in\mathbb{R}^2:\,(x_1-1)^2+(x_2-1)^2>1/16\}
$. 
In other words, there is an obstacle of radius $1/4$ centered at $[1~1]^{\top}$. The set on which training and verification take place is defined as 
$
\mathbb{X}=\Hintcc{[-2.5~-3.5]^{\intercal},[2.5~3.5]^{\intercal}}
$.
% $ \textcolor{blue}{[confirm notation here.]}
For the neural network training, we use a 2-hidden layer feedforward neural network with 30 neurons in each hidden layer and set $\alpha(x)=\Delta_{t}\norm{x}^{2}$, $g_{\mathcal{X}}(x_k) = 1 + 1/4 - ((x_{1,k} - 1)^2 + (x_{2,k} - 1)^2) / 0.25$. The learned neural network Lyapunov function and its corresponding safe ROA can be found in Fig.~\ref{fig:van_der_pol}. For the quadratic Lyapunov function, we choose $Q = I$ which results in $V_p = 16.3896 x_{1,k}^2 + 11.4027 x_{2,k}^2 - 11.1403 x_{1,k} x_{2,k}$, as demonstrated in Section~\ref{sec:verify_quadratic}. It is worth mentioning that we are able to compute an upper bound of the level set of the quadratic Lyapunov function within $\mathcal{X}$ using the method proposed in Remark~\ref{rem:c2_max} and then conduct bisection to determine $c_2$ using the verification tools. In this case, the upper bound is $85.43$. Consequently, we compute $c_1 = 1.78$ as illustrated in Section~\ref{sec:verify_quadratic}, and $c_2 = 11.14$ is determined using LyZNet. Then, we verify the conditions in Section~\ref{sec:enlarge_neural} for the neural network Lyapunov function. The time for verification with the two different verification tools for the neural network Lyapunov function is included in Table \ref{tab:verify_neural}.

\begin{figure} [h!t]
    \centering
    \includegraphics[width=\linewidth]{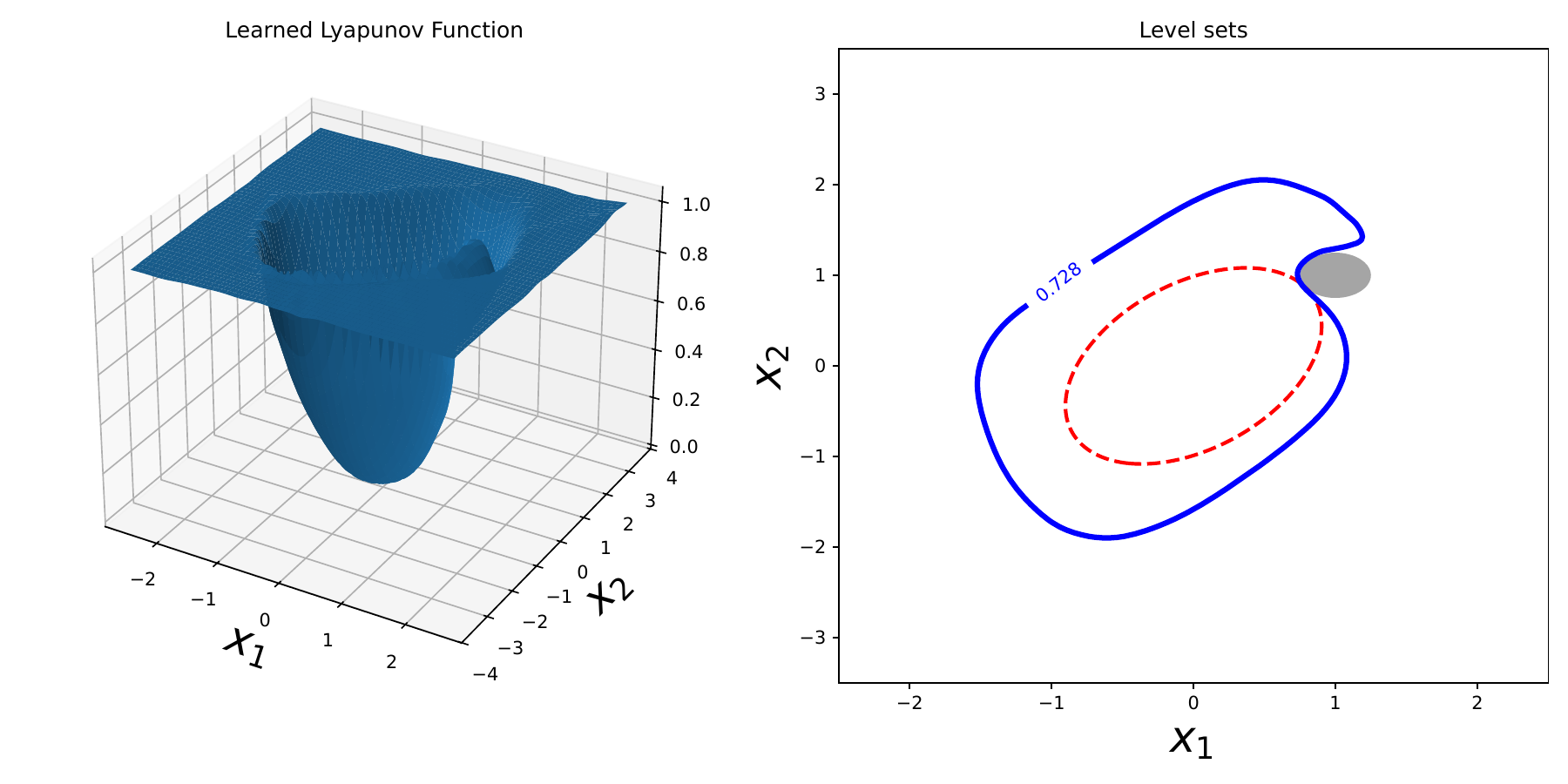}
    \caption{Neural Lyapunov function and the corresponding safe ROA (blue curve on the right) for the reversed Van der Pol oscillator, where the red dashed line represents the safe ROA obtained by the quadratic Lyapunov function with $c_2 = 11.14$.}
    \label{fig:van_der_pol}
\end{figure}

% \begin{table}[ht!]
%     \centering
%     \begin{tabular}{c|c|c}
%         \hline
%         System & dReal (s) & Crown (s) \\
%         \hline
%         \hline 
%         Van der Pol & \textbf{0.72} & 465.61 \\
%         \hline
%         Two-Machine & \textbf{0.71} & 378.82 \\
%         \hline
%         4D power system & 285,694.28 & \textbf{18,338.80} \\
%         \hline
%     \end{tabular}
%     \caption{Comparisons of the verification time for the quadratic Lyapunov function in Section~\ref{sec:verify_quadratic}}
%     \label{tab:verify_quadratic}
% \end{table}

\begin{table}[ht!]
    \centering
    \begin{tabular}{c|c|c}
        \hline
        System & dReal (s) & $\alpha,\!\beta$-CROWN (s) \\
        \hline
        \hline 
        Van der Pol & 401.69 & \textbf{358.05}\\
        \hline
        Two-Machine & 3490.81 & \textbf{295.06} \\
        \hline
        % 4D power system & -- &  \\
        % \hline
    \end{tabular}
    \caption{Comparisons of the verification time for the neural network Lyapunov function in Section~\ref{sec:enlarge_neural}}
    \label{tab:verify_neural}
\end{table}

\subsection{Two-Machine Power System}

In this subsection, we consider a discrete version of the two-dimensional two-machine power system studied in \cite{vannelli1985maximal,willems1968improved}. Same as the first example, the discrete version is obtained through Euler discretization of its continuous-time version, which yields
%\begin{equation}\label{eq:TwoMachine}
$$
f(x_{k})=    \begin{pmatrix}
    x_{1,k}+\Delta_{t}  x_{2,k}\\
       x_{2,k}-\Delta_{t}( \frac{x_{2,k}}{2}+\sin(x_{1,k}+\frac{\pi}{3})-\sin(\frac{\pi}{3}))
    \end{pmatrix},
$$
%\end{equation}
where $\Delta_{t} = 0.1$. Here, the safe set is set to be 
$
\mathcal{X}=\mathbb{R}^2\setminus ( ([0.25~0.25]^{\top}+(1/8)\mathbb{B}_{2})\cup ([0.25~ -0.25]^{\top}+(1/8)\mathbb{B}_{2}))
$. 
The set for training and verification is set to be $\mathbb{X}=\Hintcc{-[1~0.5]^{\intercal},[1~0.5]^{\intercal}}$.
We use the same $\alpha(x)$ as the first example and $g_{\mathcal{X}}(x_k) = \max \big( 1 + (1/8)^2 - \left( (x_{1,k} - 0.25)^2 + (x_{2,k} - 0.25)^2 \right),\ 1 + (1/8)^2  - \left( (x_{1,k} - 0.25)^2 + (x_{2,k} + 0.25)^2 \right) \big)$. With the same neural network structure and procedure as the ones in the previous example, both the quadratic Lyapunov function and neural network Lyapunov function can be obtained and verified. In this case, $V_P = 21.9377 x_{1,k}^2 + 33.6321 x_{2,k}^2 + 21.6816 x_{1,k} x_{2,k}$. In this example, we have $c_1 = 0.21$ and $c_2 = 0.86$. The safe ROAs and the learned neural network Lyapunov function are illustrated in Fig.~\ref{fig:two_machine}, while the verification time is reported in Table \ref{tab:verify_neural}. It is clear that in both two examples, the neural network Lyapunov functions obtained with the proposed method yield larger safe ROA estimates than the quadratic Lyapunov functions. 
% \textcolor{blue}{Due to space limit, } 
Additionally, $\alpha,\!\beta$-CROWN outperforms dReal in efficiency, even though it needs to be reinitialized in each iteration during bisection, which takes most of the reported time.

\begin{figure}[h!t]
    \centering
    \includegraphics[width=\linewidth]{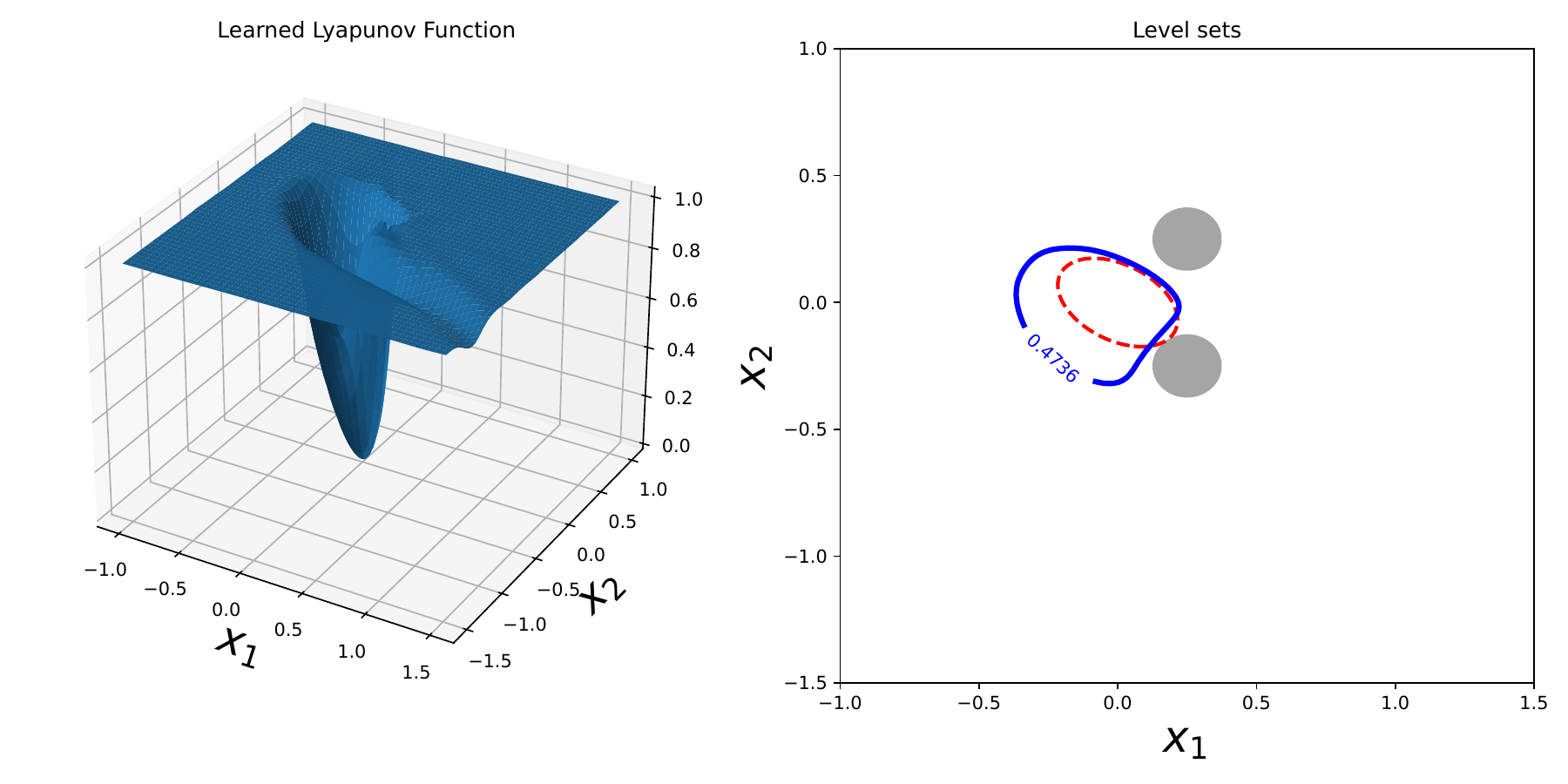}
    \caption{Neural Lyapunov function and the corresponding safe ROA (blue curve on the right) for the Two-Machine power system, where the red dashed line represents the safe ROA obtained by the quadratic Lyapunov function with $c_2 = 0.86$.}
    \label{fig:two_machine}
\end{figure}

\subsection{4-dimensional Power System}

We consider a 4-dimensional two generator bus power system in \cite[Chapter 5]{chiang2011direct}, as follows.
\[
f(x_{k})=  
\begin{pmatrix}
   x_{1,k} + \Delta_{t} x_{2,k} \\
   x_{2,k} + \Delta_{t} R_{2,k} \\
   x_{3,k} + \Delta_{t} x_{4,k} \\
   x_{4,k} + \Delta_{t} R_{4,k}
\end{pmatrix},
\]
where $R_{2,k} = \big( -\alpha_1 \sin(x_{1,k}) - \beta_1 \sin(x_{1,k} - x_{3,k}) - d_1 x_{2,k} \big)$ and $R_{4,k} = \big( -\alpha_2 \sin(x_{3,k}) - \beta_2 \sin(x_{3,k} - x_{1,k}) - d_2 x_{4,k} \big)$. The parameters are defined as follows: \( \alpha_1 = \alpha_2 = 1 \), \( \beta_1 = \beta_2 = 0.5 \), \( d_1 = 0.4 \), \( d_2 = 0.5 \), and the time step \( \Delta_{t} = 0.05 \). In this case, $\mathbb{X}=[-3.5~3.5]^4$, i.e., in each dimension, the range of the state is from $-3.5$ to $3.5$. Similar to the previous two examples, we learn a neural network Lyapunov function using a neural network with 2-hidden layers and 50 neurons each, while the quadratic Lyapunov function is also solved with $Q = I$. The expression of $V_P$ is omitted, but the constants are computed as $c_1 = 1.34$ and $c_2 = 140.625$.
For this high-dimensional system, obstacles are not included, i.e., $\mathcal{X}=\mathbb{R}^4$, as we mainly aim to illustrate the efficacy of the proposed method with formal guarantees provided by the more efficient verifier, $\alpha,\!\beta$-CROWN. Due to the limited scalability of dReal, the learned neural network Lyapunov function could not be verified with LyZNet within a 7-day timeout. The verified result using $\alpha,\!\beta$-CROWN is in Fig.~\ref{fig:power}. 
% \textcolor{blue}{[Don't have enough space to include all the details for $V_P$]}.

\begin{figure}[hbt]
    \centering
    \includegraphics[width=1\linewidth]{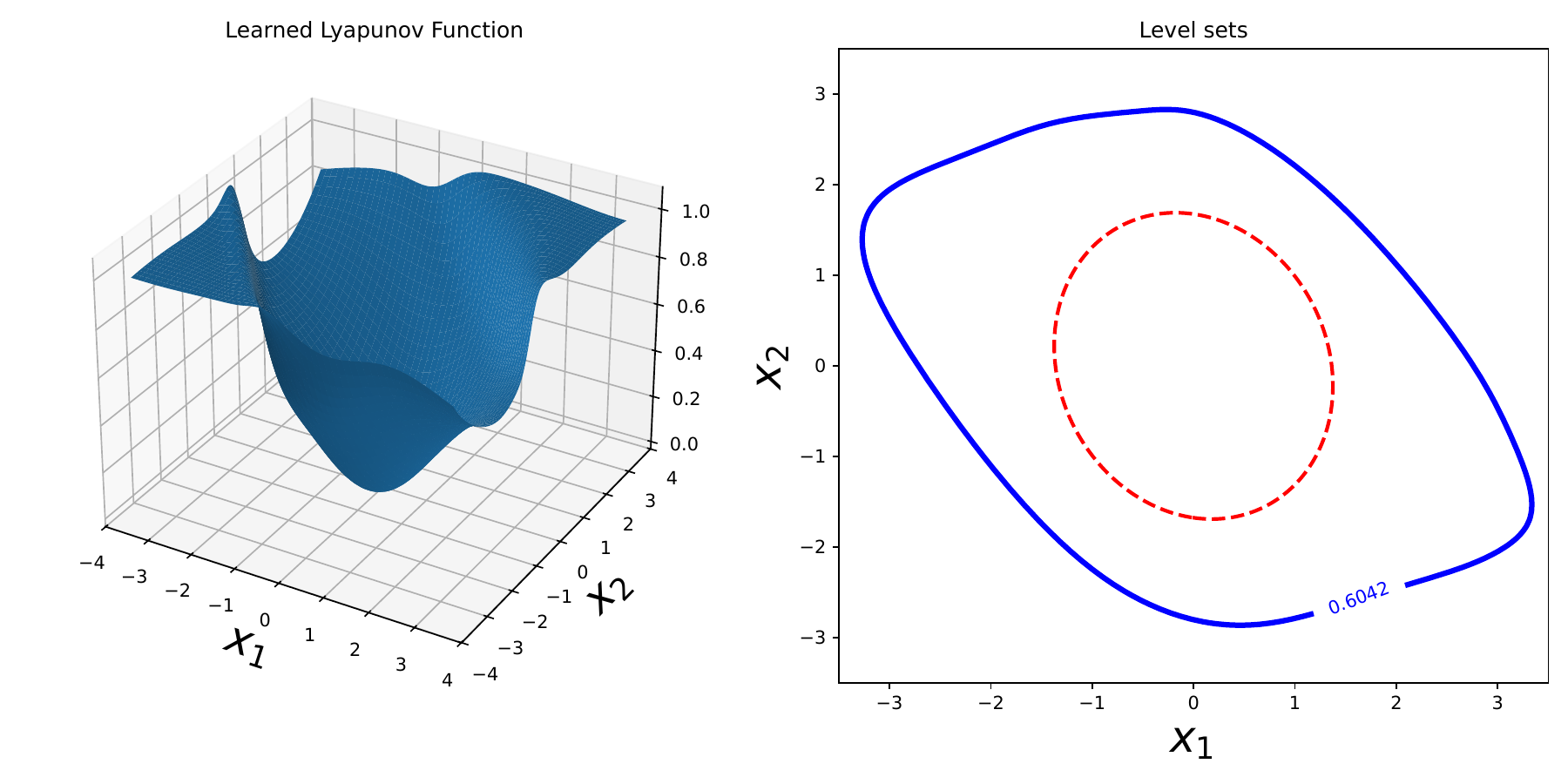}
    \caption{Neural Lyapunov function and the corresponding safe ROA (blue curve on the right) for the 4D power system, where the red dashed line represents the safe ROA obtained by the quadratic Lyapunov function with $c_2 = 140.625$.}
    \label{fig:power}
\end{figure}

\section{Conclusion}
\label{sec:Conclusion}

In this paper, we proposed a novel method for computing  verified safe ROAs for discrete-time systems by leveraging  neural networks approximate solutions to newly derived Zubov-type equations, and then obtaining certifiable ROAs (given as sublevel sets of the neural network solutions) using  verification tools. The verification framework relies on  initially obtaining  ellipsoidal ROAs and then enlarging them using neural network-based solutions. Our proposed framework was validated through three numerical examples,  illustrating its effectiveness. Future work includes the application of the proposed approach for handling disturbances to estimate a robust safe DOA, and the extension to the controlled case for computing safe null-controllability regions.

%\textcolor{blue}{some future work?}

%\bibliography{References} 
\bibliographystyle{ieeetr}

\end{document}